\journal{Journal of Infectious Disease Modeling}
\newtheorem{theorem}{Theorem}[section]
\newtheorem{proposition}[theorem]{Proposition}
\newtheorem{lemma}[theorem]{Lemma}
\begin{document}


\begin{frontmatter}
\title{The role of super-spreaders in modeling of SARS-CoV-2}



\author{Fran\c cois Rousse$^1$}
\author{Marcus Carlsson$^{2}$}
\author{Magnus \"{O}gren$^{1,3,*}\corref{mycorrespondingauthor}$}
\cortext[mycorrespondingauthor]{Corresponding author: magnus.ogren@oru.se}
\author{Benjamin Kalischer Wellander$^4$}
\address{$^1$School of Science and Technology, \"{O}rebro University, 70182 \"{O}rebro, Sweden}
\address{$^2$Centre for Mathematical Sciences, Lund University, Box 118, 22100 Lund, Sweden}
\address{$^3$Hellenic Mediterranean University, P.O. Box 1939, GR-71004, Heraklion, Greece}
\address{$^4$Centrum f\"{o}r forskning och utveckling (CFUG) Region G\"{a}vleborg, Sweden}

%
%

\begin{abstract}
In stochastic modeling of infectious diseases, it has been established that variations in infectivity affect the probability of a major outbreak, but not the shape of the curves during a major outbreak, which is predicted by deterministic models~\cite{diekmann2012mathematical}. However, such conclusions are derived under idealized assumptions such as the population size tending to infinity, and the individual degree of infectivity only depending on variations in the infectiousness period. In this paper we show that the same conclusions hold true in a finite population representing a medium size city, where the degree of infectivity is determined by the offspring distribution, which we try to make as realistic as possible for SARS-CoV-2. In particular, we consider distributions with fat tails, to incorporate the existence of super-spreaders. We also provide new theoretical results on convergence of stochastic models which allows to incorporate any offspring distribution with a finite variance.
\end{abstract}

\begin{keyword}
COVID-19 \sep compartmental models \sep SEIR \sep SIR \sep offspring distribution for SARS-CoV-2
\end{keyword}


\end{frontmatter}



\section{Introduction}

\textcolor{black}{
The spread of SARS-CoV-2 has been notoriously difficult to predict using mathematical models for the spread of infectious diseases. These models, such as SIR and SEIR, go back to the 1920's and the seminal works of Kermack and McKendrick \cite{kermack1927contribution}. For a virus with the characteristics of SARS-CoV-2, these models predict that at least 60\% of the population will become infected before the epidemic calms down, which happens within a couple of months. However, after the first wave had receded, most countries of Europe had a sero-prevalence below 20\%. This applies in particular to Sweden, which did comparatively little to halt transmission, making it a good example for comparison of model outcome and reality. This is studied in~\cite{carlsson2021indications}, which finds that the  discrepancy is so large that it raises the question of whether deterministic models (such as SIR and SEIR) properly capture reality, or whether some characteristic of SARS-CoV-2 is at odds with standard assumptions underlying these models.}

\textcolor{black}{A potential such feature is the existence of super-spreaders, or more generally, a very uneven distribution of individual levels of infectiousness \cite{adam2020clustering,endo2020estimating,jones2021estimating,wong2020evidence}}. While the law of large numbers intuitively implies that this effect should even out during large outbreaks, prior studies related to this topic do not prove that this is the case (see Section \ref{priorwork} for a brief review of known results).
In this paper we investigate how much a stochastic model taking variable infectiousness into account is likely to differ from a standard deterministic SEIR-model that assumes a homogenous distribution of infectiousness.

\textcolor{black}{More precisely, we prove theoretically that our stochastic model converges to the deterministic SEIR when the population size $N$ approaches infinity (see Appendix \ref{convergence}). Similar statements for homogenous models are well known, see e.g.~\cite{diekmann2012mathematical}, but these results are derived under the unrealistic assumption that all individuals are equally infective. Our result is new since it does not have this limitation, and also the method of proof differs significantly from prior works.} However, such probabilistic statements do not imply that the stochastic model will behave like a deterministic SEIR when modeling with a fixed population size of, say, one million (i.e.~an average large city).

\textcolor{black}{The second main contribution of the present paper is a numerical investigation of the impact of infectivity
variations with a focus on the probability of outbreaks and the
shape of the outbreak curves.  In particular we test four
types of distributions for infectivity variations, using two different Agent-Based models in a population of one million.}

\textcolor{black}{In order do this realistically, we need the offspring distribution for SARS-CoV-2, which unfortunately is hard to estimate. However, we find that our conclusions are fairly robust with respect to which type of distributions we use, and align with prior theoretical results based on simpler models, as found e.g. in \cite{diekmann2012mathematical}. In particular, we find that the probability of a major outbreak decrease when the offspring distribution has a ``fat-tail'', but that once a major outbreak is happening, the shape is well predicted by deterministic SEIR, irrespective of the distribution used.}

\section{Methods}\label{novelties}

By ``secondary cases'' we shall refer to the amount of new infections a given infected individual gives rise to. The corresponding distribution is known as the offspring distribution, and a standard assumption is that this distribution is negative binomial with a small value of the dispersion parameter $k$ \cite{blumberg2013inference,lloyd2005superspreading}. Small values of $k$ are (somewhat counterintuitively) associated with a high variance and hence that super spreaders are common. It has also been suggested that the offspring distribution could be fat tailed \cite{wong2020evidence}, which would give rise to even more irregular behavior also during large outbreaks \cite{taleb2020single}.

The shape of the offspring distribution was systematically investigated \cite{kremer2021quantifying}.
While the authors did not find support for the hypothesis of a fat-tailed shape in a strict sense, they do provide various estimates indicating that a minority of infected persons were responsible for the majority of transmissions. One such indicator is the estimated percentage of spreaders that are accountable for 80\% of secondary cases, denoted $p_{80}$. For example, using contact tracing data from Hong-Kong, they estimate that $p_{80} = 19\%$. This seems to align with the results of other observational studies that collectively estimate $p_{80}$ in the range $ 10\%$ to $19\%$ \cite{adam2020clustering, endo2020estimating, sun2021transmission}.
Furthermore, \cite{lau2020characterizing} estimates that $20\%$ of all new infections are caused by the top-spreading $2\%$ of all infected.
In conclusion, while the shape of the offspring distribution remains unclear, there is ample evidence that it is heavily skewed, implying that super-spreaders are common.

In this work we test various offspring distributions to determine how much dispersion is needed to observe a fundamentally different behavior than the simpler deterministic SEIR model. We denote the corresponding random variable $X$, so that $P(X=n)$ is the probability that a given individual gives rise to $n$ new infections.
We then build an agent-based-model where each infected individual $i$ gives rise to $x_i$ new infections (in a fully susceptible population), where $x_i$ is drawn from the offspring distribution. The model is a stochastic extension of the classical deterministic SEIR-model, where S stands for susceptible, E for exposed, I for infective and R for recovered (further details in Section \ref{abm1}). We address the following questions:

\begin{itemize}
\item During a major outbreak, how does the (more realistic) AB-model deviate from a standard SEIR-model in a city of one million people?
\item At what threshold of ``fatness'' do we start to see unexpected behavior, such as an outbreak becoming self-extinct with high probability, if initiated with 100 import cases?
\end{itemize}
In short, our conclusion is that for all realistic offspring distributions, the answer to the first question is that the SEIR-model is a good approximation, and hence lacking detailed information about the offspring distribution is not a major drawback for modeling. Concerning the second question, we find that the probability of an outbreak becoming self-extinct increases with the fatness of the tail. Both these conclusions are in line with \textcolor{black}{well known theoretical results, see e.g.~Chapter 1 and 3 in \cite{diekmann2012mathematical} (albeit these are based on simpler models and hence are not necessarily applicable in our scenario)}. This is further discussed in Section \ref{priorwork}.

Note that $\mathbb{E}(X)=R_0$. For all distributions we chose parameters such that $R_0=1.3$. While most estimates indicate that $R_0$ for SARS-CoV-2 is higher, we picked $1.3$ since it is close to 1 and hence a random behavior is more likely than for higher values of $R_0$. We consider different distributions, and for each type we additionally pick parameters so that $p_{80}$ is either $20\%$, $10\%$ or $5\%$. The distributions are the following
\begin{itemize}
\item[(a)] A simple ``3-group distribution'' where each infected individual either spreads to 0, 1 or $n$ individuals, where $n=5$ ($p_{80}\approx 20\%$), $n=10$ ($p_{80}\approx 10\%$) and $n=20$ ($p_{80}\approx 5\%$).
\item[(b)] The negative binomial distribution with $k= 0.35$, $k= 0.11$ and $k= 0.05$.
\item[(c)] The (discretised) Pareto distribution with $\alpha=1.4$, $\alpha=1.2$ and $\alpha=1.13$.
\item[(d)] A truncated version of the above Pareto distribution to find out the role of extreme super-spreaders, defined as those responsible for more than 100 secondary infections.
\end{itemize}
See Figure \ref{fig1} for an illustration of (b) and (c), as well as Table \ref{tab:probTable}. As is clearly visible in the figure, the negative binomial distribution is in fact exponentially decaying, implying that high values of $X$ are improbable, in comparison with the Generalised Pareto distribution. For example, in the extreme case when $p_{80}=5\%$, we see that the two graphs cross at around 100 secondary cases. While it is virtually impossible to infect more than 1000 persons using the former model, the probability is 0.0076\% with the latter. Whether this is possible in reality or not is an important question, and plausible offspring distributions are hence discussed in the next subsection.

\begin{table}
\begin{center}
\caption*{Negative Binomial Distribution}
\begin{tabular}{ | c | c | c | c | c |}
\hline
 $k$ & $P(X=0)$ & $P(X > 10)$ & $P(X > 100)$ & $P(X > 1000)$ \\
 \hline
  0.05 &    85\%     & 3.6\% &     2.7e-02\% &  $<$ e-12\% \\
 \hline
0.11 &      75\%  &   3.4\% &  4.6e-04\%  & $<$ e-12\% \\
 \hline
0.35 &      58\%    & 1.4\% &  1.8e-10\%  & $<$ e-12\% \\
       \hline
\end{tabular}
\end{center}
\begin{center}
\caption*{Discretised Pareto Distribution}
\begin{tabular}{ | c | c | c | c | c |}
\hline
 $\alpha$ & $P(X=0)$ & $P(X > 10)$ & $P(X > 100)$ & $P(X > 1000)$ \\
 \hline
  1.13 &    86\%     & 1.2\% &     0.10\% &  7.6e-03\% \\
 \hline
1.2 &      81\%  &   1.4\% &  0.10\%  & 6.6e-03\% \\
 \hline
1.4 &      72\%    & 1.8\% &  0.087\%  & 3.5e-03\% \\
       \hline
\end{tabular}
\end{center}
\caption{Probability charts for some of the offspring distributions used below. Note the similarity of the first column, whereas the tail probabilities differ significantly.}
\label{tab:probTable}
\end{table}

\subsection{What is a reasonable offspring distribution for SARS-CoV-2?}\label{reasonable}

\begin{figure}
\begin{center}
\includegraphics[height=6.2 cm]{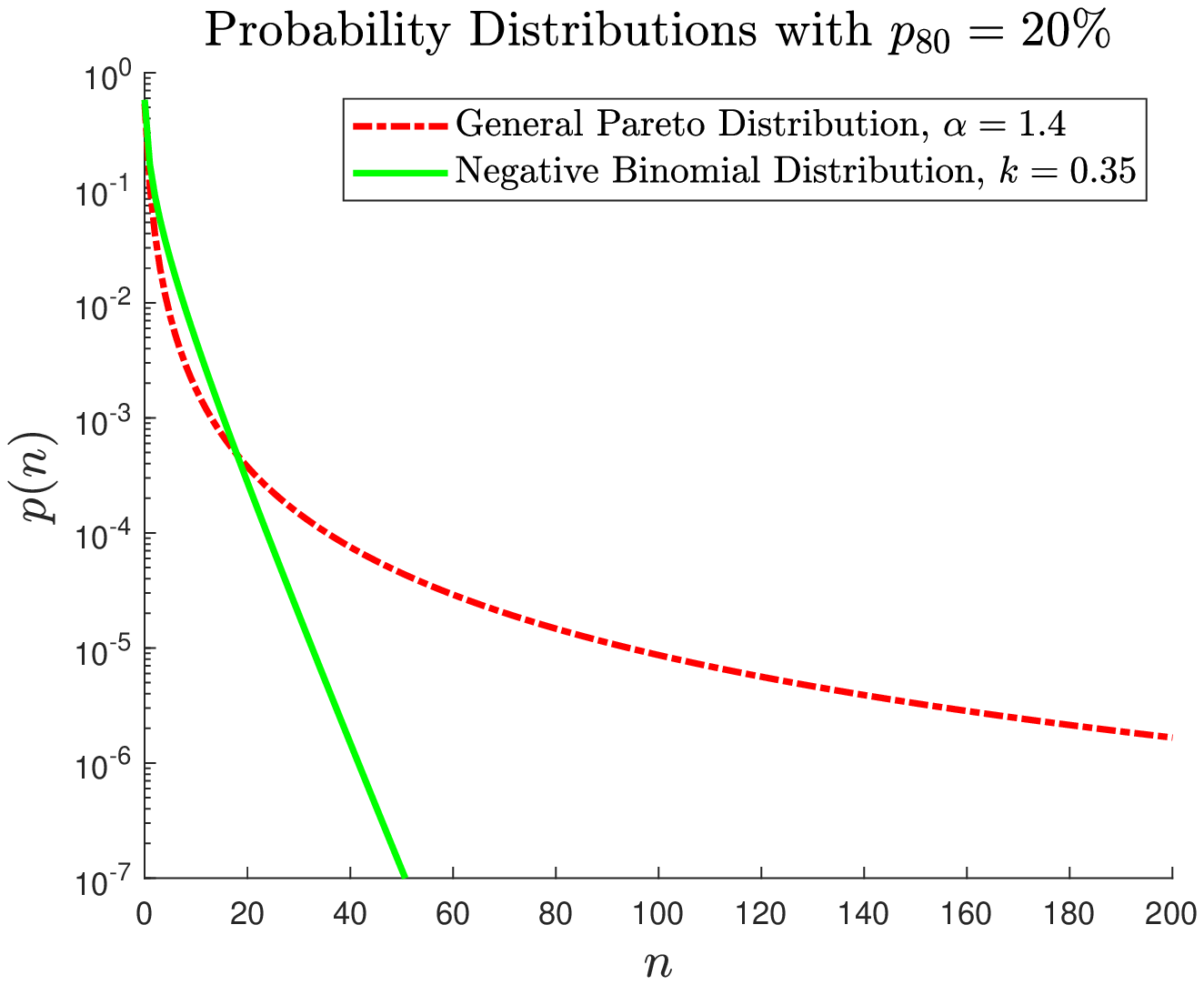}
\includegraphics[height=6.2 cm]{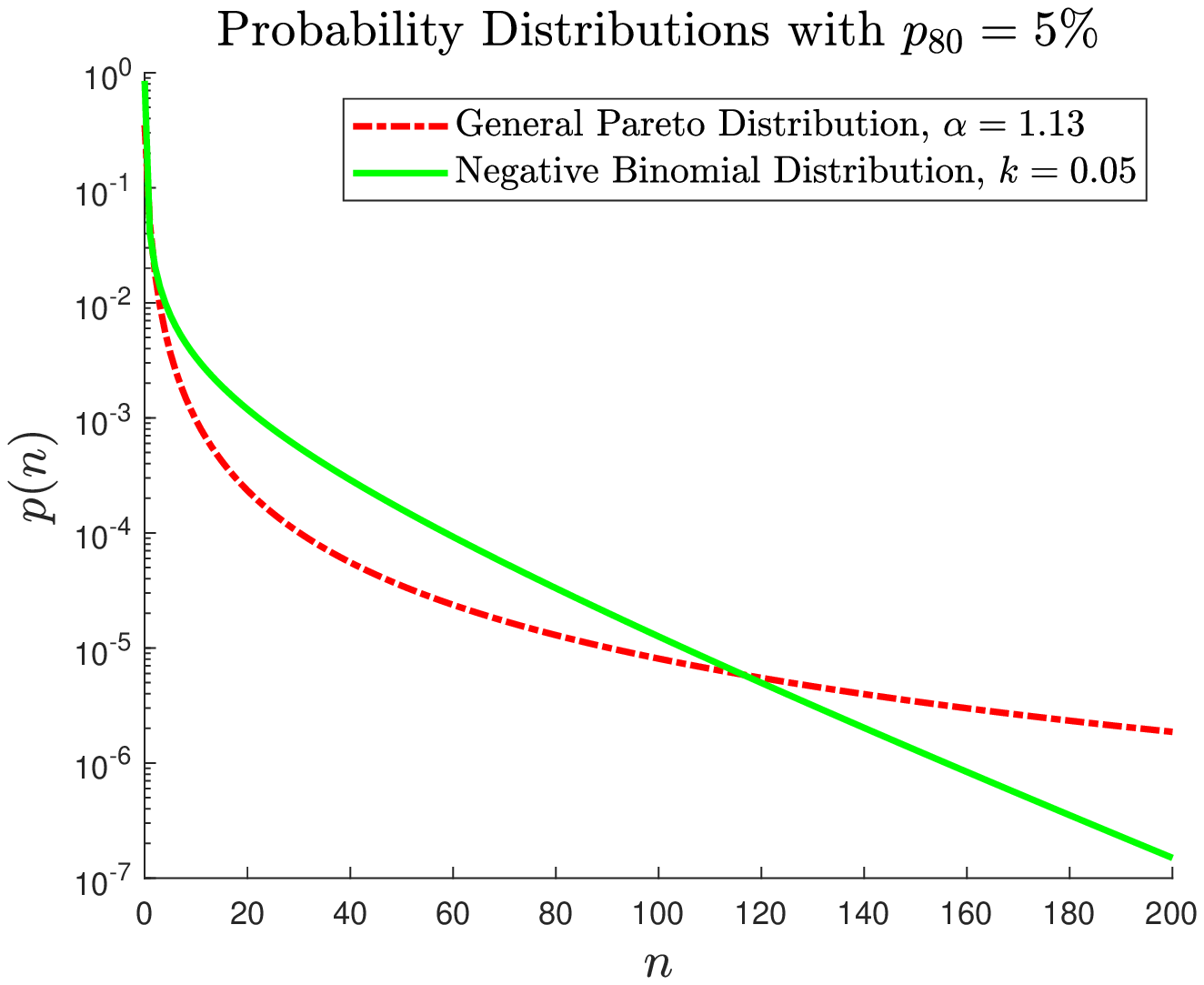}
\caption{Probability distribution functions of the Generalised Pareto Distribution and the Negative Binomial Distribution for various coefficients used in this study. }
\label{fig1}
\end{center}
\end{figure}

Data about the actual offspring distribution for SARS-CoV-2 comes from an analysis of contact-tracing studies. For example,  a large contact-tracing study from India  summarizes  the amount of secondary cases originating from 80000 index cases in a histogram \cite{laxminarayan2020epidemiology}. At first, it seems that the offspring distribution could be estimated from this by simply normalizing the total participants. Unfortunately, this gives rise to a distribution with an expectation value near $0.5$, implying that $R_0$ is well below one.
This could have many reasons. For example the data could indeed have been collected in a period when the effective reproductive rate ($R_e$) was below one, but a far more likely
explanation is that a sizable amount of SARS-CoV-2 infections were unidentified by contact tracing (transmission in restaurants, public transportation, etc).
Contact tracing studies are naturally biased in that they focus on \textit{known} contacts of the infective participant.

Still, the histogram from~\cite{laxminarayan2020epidemiology} gives valuable information, for example it shows that around $70\%$ of infected individuals gave rise to zero new infections, which means that a minority must give rise to the majority of new cases, which is one indication that the distribution could be ``fat-tailed'' or at least very skewed. But the fact that a majority of transmissions likely was unidentified calls for caution when interpreting the distribution. Another issue is that the study was carried out during strict NPI’s, which significantly limits super-spreader events and therefore should  affect the  tail of the offspring distribution.

The last point raises an important issue, namely that the offspring distribution clearly will depend not only on biological properties of the pathogen, but also on NPI’s and local factors such as household density and other socio-economic factors. Hence it is futile to talk about \textit{the} offspring distribution.

The study \cite{kremer2021quantifying} is an attempt at inferring the offspring distribution by fitting to three large contact tracing studies for SARS-CoV-2.
Unfortunately, all studies suffer the same drawbacks as noted above, reflected in that
they unexpectedly infer $R_e < 1$.
Since we are interested in how the offspring distribution looks when $R_0$ is above 1, it is questionable if these contact tracing studies can be used to derive any concrete information about the particular shape of the offspring distribution.

On the other hand, \cite{yang2021just} found that $2\%$ of test-positive cases (so called super-shedders) carries as much as $90\%$ of the total amount of SARS-CoV-2 virions circulating in a community.
This of course does not imply that $90\%$ of the spread is caused by these $2\%$, since the super-shedders cannot infect more people than they interact with. Still, it is an indication that the estimates of $p_{80}$ in the range $10-20\%$ (based on contact-tracing studies) may be overly conservative. For this reason, we included the case $p_{80}=5\%$ as well in our study.

In summary, there is currently no conclusive evidence for the correct value of $p_{80}$, let alone the shape of the offspring distribution. For these reasons we validate our conclusions with a number of fundamentally different distributions. We found that our conclusions are independent of which type of distribution is used, so the above mentioned uncertainties do not limit our findings.

To really test whether heavily skewed offspring distributions yield a different behavior, we will in particular run our model using a fat-tailed offspring distribution.
A fat-tailed distribution is one where the values $P(X=n)$ decay so slowly that the second moment (i.e.~the variance) does not exist. Neither of the first two distributions, (a) and (b), are fat tailed, which in practice means that the probability of hitting a high value of $x_i$ (the amount of secondary cases caused by individual $i$) is negligible. This is not the case for the Pareto distribution, which is the standard choice for modeling fat tails (see \ref{sec:pareto} for details). The significance of the parameter $\alpha$ in the Pareto distribution is that \begin{displaymath}
P(X\geq n)\approx n^{-\alpha},
\end{displaymath}
so the Pareto distribution is fat tailed whenever $\alpha<2$ and for $\alpha=1$ the tail becomes so fat that even the expectation value cease to exist. For values of $\alpha$ near one it is therefore no longer the case that we can neglect $P(X=x)$ for large values of $x$. For example, the probability of infecting more than $100$ persons is around $10^{-3}$ for all three Pareto distributions, and the probability of infecting more than $1000$ is above $ 10^{-5}$. In a city of one million, this means that on average 1000 persons have the capacity to generate more than 100 secondary cases, and between 10 and 100 individuals the capacity to generate more than 1000. This is in stark contrast with the Negative Binomial Distributions, where the probability of the latter is less than $10^{-14}$, meaning that we would need $10^8$ cities of one million each to find one such individual. This illustrates the vast difference between fat-tailed distributions and exponentially decaying ones.

But is it reasonable for one infected individual to infect more than a hundred, or a thousand? A super-market employee can easily interact with more than 100 customers each day, but
it is unclear whether such brief encounters are enough to infect most contacts. In any case, we have chosen to also include a truncated version of the Pareto-type distribution to better understand the effect of extreme super-spreaders when the tail is fat. We truncate at both $1000$ and at $100$, and discuss these results separately in Section \ref{sec:trunc}.

\begin{figure}[ht!]
    \begin{center}
    \includegraphics[height=8 cm]{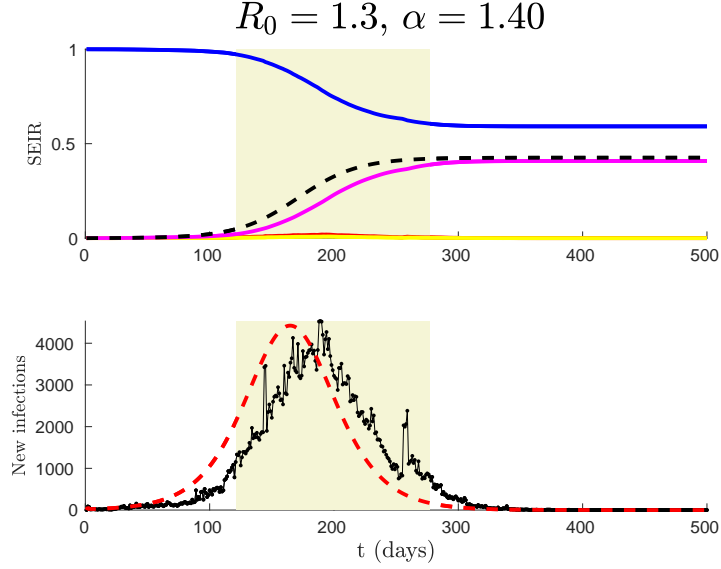}
    \caption{Result of an epidemic simulation with our AB-model and the classical SEIR, both using $R_0=1.3$. In the AB-model we used a Generalised Pareto Distribution with shape parameter $\alpha = 1.40$. In the upper figure, solid lines display the output of the deterministic SEIR; with blue for Susceptible, red for Exposed, yellow for Infective and pink for Recovered. The dashed black curve displays the Recovered group in one fixed trial using the AB-model. The final values of the solid pink and dashed black curves are the cumulative number of infections, known as the final size of the epidemic and denoted by $r(\infty)$, for the classical SEIR and the AB-model respectively. In the lower figure, the number of daily new infections for one trial of the AB-model is shown in black and for the SEIR model in dashed red. The grey area is considered to capture the ``epidemic wave'' (for the AB-model), corresponding to values of $R$ between $5\%$ and $95\%$ of $r(\infty)$. The width of this area then defines the ``epidemic wave time'' $T_{\textnormal{wave}}$.
    }
    \label{fig:example}
    \end{center}
\end{figure}

\subsection{Modeling details}\label{abm1}

We use the standard (deterministic) SEIR model
\begin{equation}\label{eq:seir}
\left\{\begin{array}{l}
        \dot s(t)=  -\alpha s(t)i(t)\\
         \dot e(t)=\alpha s(t)i(t)-\sigma e(t) \\
         \dot i(t)=\sigma e(t)-\gamma i(t) \\
         \dot r(t)=\gamma i(t)
       \end{array}
\right.\end{equation}
implemented with a step-size of one day (see \eqref{eq:seir2} in Appendix \ref{convergence} for details.)
%
%
We use the following parameter values; $T_{\textnormal{infective}}=2.1$, $T_{\textnormal{incubation}}=4.6$, $\alpha=R_0/T_{\textnormal{infective}}$, $\sigma=1/T_{\textnormal{incubation}}$ and $\gamma=1/T_{\textnormal{infective}}$, where $R_0$ is set to 1.3 in all trials.

We first sought to construct a model as close as possible to \eqref{eq:seir}, yet capturing the variability of disease transmission potential between individuals.
We model a population of $N$ individuals, each one in a state $z \in \{\mathrm{S}, \mathrm{E}, \mathrm{I}, \mathrm{R}\}$ respectively: susceptible, exposed, infected and recovered. Their state can vary once each day so that we denote by $z_{k,t}$ the state of individual $k$ at day $t$. Each individual has also a personal basic reproduction number $x_{k}$ drawn once from the offspring distribution.
Note that this is the number of people s/he is expected to infect, if s/he becomes infected, assuming that the population is fully susceptible. As immunity builds up, the actual outcome will be less. Hence $x_k$ should be interpreted as the total amount of \textit{potentially infections contacts} (i.e.~meetings with other individuals that lead to infection of the other in case s/he is susceptible). Moreover, since $T_{\textnormal{infective}}$ is the average time of infectiousness, this amounts to a rate of $x_k/T_{\textnormal{infectious}}=\gamma x_k$ daily such contacts. \textcolor{black}{We assume that the amount of infectives at any given moment is small compared to the total population, so that each person only meets at most one infective per day, and the total amount of potentially infectious contacts on any given day stays far below the total population, which certainly holds true in our simulations. }

At each time step (here each day) we compute $z_{k,t+1}$ as:
\begin{enumerate}
    \item Each infected agent adds its new daily contacts into a global pool of contacts, $N_{t} = \lfloor\gamma \sum_k x_{k} \delta_{z_{k,t}, \mathrm{I}}\rfloor$, where $\lfloor\cdot\rfloor$ denotes the integer part.
    \item $N_{t}$ individuals are drawn randomly among the global population. If they were susceptible, they become exposed.
    \item Each exposed agent has a $\sigma$ probability to become infected, $P(z_{k,t+1} = \mathrm{I} | z_{k,t} = \mathrm{E}) = \sigma$.
    \item Each infected agent has a $\gamma$ probability to be cured, and to become recovered, $P(z_{k,t+1} = \mathrm{R} | z_{k,t} = \mathrm{I}) = \gamma$.
\end{enumerate}

\textcolor{black}{See Figure \ref{fig:example} for a concrete realization in comparison with classical SEIR. Note that it takes a bit longer for the stochastic model to gain momentum, but that once this happens, the ensuing curve is well approximated by the deterministic SEIR model. For similar graphs using simpler models, see e.g.~Fig.~3.7 in \cite{diekmann2012mathematical}.}

The above model, which we refer to as AB-model 1, has a few principle drawbacks which we now describe. \textcolor{black}{Note that the total number of potentially infectious contacts of a particular individual equals $\gamma x_k $ times the amount of days that person remains in the infective compartment, which is determined by the process described in step 4 above. If we denote by $T$ the random variable describing the amount of days a person remains infective, then $T$ has an exponentially decaying PDF and a minor computation (see \ref{abm1appendix}) shows that $\mathbb{E}(T)=T_{\textnormal{infective}}$. The total amount of secondary infections is thus given by the random variable $\gamma X T$ and,
 since $X$ and $T$ are independent, we see that} \begin{equation}\label{andorra}\mathbb{E}(\gamma T X)=\gamma\mathbb{E}(T)\mathbb{E}(X)=\mathbb{E}(X)=R_0,\end{equation} \textcolor{black}{as it should be.
However, it is notable that the number of potentially infectious contacts is in the end \textit{not} given by the offspring distribution, in contrast to the initial motivation.} In fact, it is not even an integer. In addition, the assumption that an individual has the same probability of remaining sick each new day, independent of how long that individual has been sick already, is not very realistic. This drawback is not specific for the  AB-model 1 but pertains to the classical SEIR-model as well. In \cite{andersson2012stochastic} this is commented as follows: ``The assumption of an exponentially distributed infectious period is certainly not epidemiologically motivated, although with this assumption the mathematical analysis becomes much simpler.'' The authors of \cite{brauer2019mathematical} are on the same line; ``This assumption, while making the models and their analyses easier, is not biologically realistic for most infectious diseases'' (see Ch.~3.6). In section 4.5 and 4.6 of the same book they proceed to introduce more realistic systems of integral equations which keep track of the ``age of infection''. Such systems have a long history, see e.g.~\cite{hethcote1980integral} and \cite{lloyd2001destabilization}, where it is established that these models are more sensitive, and \cite{feng2007epidemiological} argues that the use of these more sophisticated methods can lead to different policy decisions. \textcolor{black}{Based on these facts, we argue that it is by no means evident that classical results concerning the final size of the pandemic and the probability of major outbreaks, as found in \cite{diekmann2012mathematical} Ch. 1 and 3, will apply in the modeling framework considered here.}

In order to make sure that our findings are not corrupted by the above mentioned drawbacks, we built a second AB-model that keeps track of the age of infection (called AB-model 2).
To do this we need the probability density function of the \textit{serial interval}, i.e.~the time from infection of an individual to the time s/he infects others. We obtained the PDF for this random variable, let us denote it $T_{SI}$, from the empirical study \cite{bi2020epidemiology}. We leave the details of the second model to the \ref{abm2}.
In conclusion, we found that the output of AB-models 1 and 2 are completely analogous.

\subsection{Relation to prior work}\label{priorwork}
As mentioned earlier, our main findings are in line with previously established results. A good overview source is \cite{diekmann2012mathematical}, in particular Chapters 1, 3 and 12. For instance, stochastic models for the spread of infectious diseases have previously been shown to behave like deterministic models when the population size tends to infinity~\cite{andersson2012stochastic,britton2019basic}. However, the stochastic ingredient in these models is that the time an infected individual remains infectious is random, whereas the degree of infectivity during the infectious period is assumed to be the same for all. Moreover, the proofs in this area are notoriously technical, we refer to \cite{armbruster2017elementary} for an overview and a (comparatively) ``elementary'' proof.

However, in this study we are mainly interested in the behavior during major outbreaks in a finite population with $N=10^6$, as well as in the probability that large outbreaks occur, and for these questions results of the above type could still be misguiding, since $10^6$ is far from $\infty$. Similarly, based on the theory for branching processes, it is well known that the probability of one infected individual giving rise to an outbreak that dies out by itself (minor outbreak), in an infinite population, is given by the smallest solution in $[0,1]$ to the equation \begin{equation}\label{genfunc}q=G(q):=\sum_{k=0}^\infty P(X=k)q^k\end{equation}
(cf.~\cite{diekmann2012mathematical} Chapter 1). It is easy to see that $G(0)=P(X=0)$, $G(1)=1$, $G'(1)=\mathbb{E}(X)=R_0$ and that $G$ is convex on $[0,1]$, so this equation has precisely one solution $\hat{q}$ in $[0,1)$. If we suppose that the 100 initial spreaders in our AB-model give rise to branching processes that do not intersect, one may suppose that the probability of our AB-model going self extinct is $\hat{q}^{100}$. That does not seem to be the case, for reasons that are unclear to us, although the value of $\hat{q}^{100}$ does provide a rough indication (see Section \ref{results}). \textcolor{black}{This underscores that one should be cautions in relying solely on theoretical predictions based on simplifying assumptions, and demonstrates the necessity of performing the actual modeling.} Moreover, it is not clear by looking at equation \eqref{genfunc} that a fatter tail necessarily leads to a higher probability of a minor outbreak, which is something that we shall establish experimentally.

The paper \cite{lloyd2005superspreading} investigates experimentally to what extent super-spreaders affects the probability of a major outbreak, finding that for small enough $k$-value erratic behavior appears even when assuming a negative binomial (thin tailed) distribution. For example, they calculate that the probability that the disease goes self-extinct is above 90\% if $R_0=1.5$ and the dispersion parameter $k$ equals 0.1 (for which $p_{80}\approx 10\%$). Our work differs from this in that we also consider fat-tail distributions, and that the epidemic is initiated with 100 rather than 1 individual, which is more realistic for import cases of COVID-19.

The paper \cite{wulkow2021prediction}  start with a detailed AB-model focusing on mobility patterns linked to Google data, however without taking variable infectivity into account. They conclude numerically that the output of such an advanced model, when applied to the city of Berlin, is well approximated by an 11-compartment ODE model (extended SEIR). This is thus in line with previous findings about stochastic models converging to deterministic models during large outbreaks~\cite{andersson2012stochastic,britton2019basic}, which we also confirm in the present contribution.

There are various papers studying numerically how various population heterogeneities affect the model curves, with the result that most heterogeneities such as variation in social activity or susceptibility, does have a damping effect on the overall spread \cite{britton2020mathematical,dolbeault2021social,gerasimov2021covid}. From this perspective, the finding that variable infectivity does not affect model output is a bit surprising. In contrast, variable infectivity has been shown to lead to larger outbreaks if it is positively correlated with variable susceptibility \cite{miller2007epidemic}, but this is not a topic we pursue here. Finally, the final size of an epidemic $r(\infty)$ can be computed with surprising accuracy, using only heuristic arguments (i.e.~without solving the SEIR), by solving an equation of the type \begin{equation}\label{finalsize}r_\infty=1-e^{-R_0r_{\infty}}\end{equation}
(see e.g.~\cite{diekmann2012mathematical}). In \cite{miller2012note}, this argument is extrapolated to show that variable infectivity (super-spreaders) does not effect the equation and hence not the final size. However, this conclusion is derived under the assumption that deterministic approximations apply, which is precisely what we want to establish in this paper. As a conclusion, we thus infer that the estimate $r_{\infty}$ from \eqref{finalsize} is a good approximation of the final size in our AB-model.

\section{Results}

\subsection{Theory}

Intuitively, once the population is large enough that the law of large numbers applies, differences in infectivity should average out. This is the essence of the below theorem.

\begin{theorem}\label{t0}
\textcolor{black}{Assume that $X$ has finite variance.} Let $T$ be a fixed integer and let $\bold{s},~\bold{e},~\bold{i}$ and $\bold{r}$ be random variables describing the fraction of susceptible, exposed, infected and recovered in a realization of the AB-model 1 from Section \ref{abm1}. In the limit as the population goes to infinity, these random variables converge in probability to a solution of (the discrete time version of) SEIR \eqref{eq:seir} for each $t$ in the time interval $\{1,\ldots,T\}$.
\end{theorem}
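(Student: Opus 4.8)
The plan is to argue by induction on the day $t$, turning each statement ``the fractions at day $t+1$ are close to the SEIR values'' into four one-step concentration estimates via the law of large numbers. Write $\mathbf{s}_t,\mathbf{e}_t,\mathbf{i}_t,\mathbf{r}_t$ for the random fractions at day $t$ and $(s_t,e_t,i_t,r_t)$ for the one-day forward-Euler discretization of \eqref{eq:seir}, which (using $\alpha=\gamma R_0$) reads
\begin{equation}\label{discreteseir}
\begin{aligned}
s_{t+1}&=s_t-\alpha s_t i_t, & e_{t+1}&=e_t+\alpha s_t i_t-\sigma e_t,\\
i_{t+1}&=i_t+\sigma e_t-\gamma i_t, & r_{t+1}&=r_t+\gamma i_t.
\end{aligned}
\end{equation}
The induction hypothesis is that $(\mathbf{s}_t,\mathbf{e}_t,\mathbf{i}_t,\mathbf{r}_t)\to(s_t,e_t,i_t,r_t)$ in probability, the base case being the (assumed) convergence of the initial fractions to $(s_0,e_0,i_0,r_0)$. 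Since \eqref{discreteseir} is deterministic, convergence ``to a solution'' is just convergence in probability to a constant, so it suffices to show that each daily increment, divided by $N$, converges in probability to the matching right-hand side of \eqref{discreteseir}; the inductive step then closes by Slutsky's theorem, and finitely many steps cover $t\in\{1,\dots,T\}$.

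First I would dispatch the two internal transitions. Conditionally on the configuration at day $t$, the number $\Delta I_t$ of exposed that progress to infected is $\mathrm{Binom}(N\mathbf{e}_t,\sigma)$ and the number $\Delta R_t$ of infected that recover is $\mathrm{Binom}(N\mathbf{i}_t,\gamma)$; each has variance $O(N)$, so Chebyshev gives $\Delta I_t/N\to\sigma e_t$ and $\Delta R_t/N\to\gamma i_t$ in probability under the inductive hypothesis. The exposure step is only slightly harder: conditionally on the size $N_t$ of the daily contact pool and on $\mathbf{s}_t$, the count $\Delta E_t$ of newly exposed is a sampling variable with conditional mean close to $N_t\mathbf{s}_t$ and variance $O(N)$ (here I use, as the paper does, that the daily pool is a small fraction of $N$, so repeated draws of the same individual are negligible at order $N$). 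Thus $\Delta E_t/N\to\alpha s_t i_t$ as soon as one knows that $N_t/N\to\alpha i_t$.

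The real content is this last limit, i.e. that $N_t=\lfloor\gamma\Theta_t\rfloor$ with $\Theta_t=\sum_k x_k\,\delta_{z_{k,t},\mathrm{I}}$ satisfies $\Theta_t/N\to R_0 i_t$. The conceptual key, which makes a law of large numbers applicable even though the infected set is random and a priori entangled with the $x_k$, is the following independence: for each fixed $k$, the trajectory of individual $k$ up to the instant it first becomes infected is independent of its own $x_k$. Indeed, infections are assigned through $x$-blind uniform draws (step 2) and Bernoulli clocks (steps 3--4), and $x_k$ feeds into the dynamics only through the pool sizes on days when $k$ is \emph{already} infected; altering $x_k$ therefore cannot change whether or when $k$ itself becomes infected. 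Hence $\delta_{z_{k,t},\mathrm{I}}$ is independent of $x_k$, giving $\mathbb{E}[\Theta_t]=R_0\,\mathbb{E}[N\mathbf{i}_t]$ and reducing the task to showing that $W_t:=\Theta_t-R_0 N\mathbf{i}_t=\sum_k \delta_{z_{k,t},\mathrm{I}}(x_k-R_0)$ is $o(N)$ in probability.

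I would control $W_t$ by the second-moment method, and this is where I expect the \emph{main obstacle}. Since $\mathbb{E}[W_t]=0$, the diagonal of $\mathbb{E}[W_t^2]$ contributes $\sum_k P(z_{k,t}=\mathrm{I})\,\mathrm{Var}(X)=O(N)$, which is exactly where the finite-variance hypothesis first enters. The off-diagonal terms are the covariances $\mathbb{E}\big[\delta_{z_{k,t},\mathrm{I}}\,\delta_{z_{l,t},\mathrm{I}}(x_k-R_0)(x_l-R_0)\big]$ for $k\neq l$, and the crux is to prove each is $O(1/N)$, so that the $\sim N^2$ of them still sum to $O(N)$. This demands quantifying how weakly the status of $k$ depends on a single other offspring value $x_l$: resampling $x_l$ perturbs the contact pool on $l$'s infective days by only $O(\gamma x_l)$ contacts, and a coupling / bounded-differences argument should show this flips the status of a given $k$ with probability $O(\gamma x_l/N)$. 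Conditioning on $(x_k,x_l)$ and using that the parts of the conditional co-occurrence probability not genuinely coupling $k$ and $l$ are annihilated by $\mathbb{E}[x_k-R_0]=\mathbb{E}[x_l-R_0]=0$, only the interaction term survives, bounded by $O(1/N)\cdot\mathbb{E}[|X|\,|X-R_0|]\,\mathbb{E}|X-R_0|$, which is finite \emph{precisely} because $X$ has a finite second moment (a rare-event truncation handles the realizations where a single $x_l$ is comparable to $N$). Granting this estimate, $\mathrm{Var}(W_t)=O(N)$, so $W_t/N\to0$ in $L^2$ and hence in probability, whence $N_t/N\to\alpha i_t$. Substituting the four increment limits into the update rule reproduces \eqref{discreteseir} at day $t+1$, closing the induction.
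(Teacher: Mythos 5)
Your proof plan follows the same skeleton as the paper's actual proof (Theorem \ref{t1} in Appendix \ref{convergence}): induction over the days $t\in\{1,\ldots,T\}$, a union bound over the finite horizon, one-step concentration of each compartmental flow, and Chebyshev to pass from mean/variance control to convergence in probability. The binomial treatment of the $\mathrm{E}\to\mathrm{I}$ and $\mathrm{I}\to\mathrm{R}$ flows is exactly what the paper omits as ``simpler'', and your exposure step corresponds to the paper's hypergeometric computation (the $N_t$ contacts are drawn without replacement, so $\mathbb{E}[\bold{M}_{se}(t+1)\mid \bold{S}(t),\bold{N}(t)]=\bold{N}(t)\bold{S}(t)/N$ holds exactly and your ``repeated draws are negligible'' caveat is unnecessary). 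The genuine divergence is at the crux, the law of large numbers for the contact pool. The paper (Proposition \ref{p1}) conditions on $\bold{i}(t)$, computes $\mathbb{E}[\bold{p}(t)\mid\bold{i}(t)=i]=\alpha i$ and $\mathbb{V}ar(\bold{p}(t)\mid\bold{i}(t)=i)=i\gamma^2\mathbb{V}ar(\bold{X})/N$ as if the marks of the $iN$ currently infected individuals were i.i.d.\ copies of $\bold{X}$, and then lets the law of total variance absorb all inter-individual dependence into a $\mathbb{V}ar(\bold{i}(t))$ term that vanishes by the inductive hypothesis; no off-diagonal covariance analysis ever appears. Your recentred variable $W_t=\Theta_t-R_0N\bold{i}_t$ plays the same structural role as that conditional decomposition (it strips out the $\mathbb{V}ar(\bold{i}(t))$ contribution), and your observation that $\delta_{\bold{z}_{k,t},\mathrm{I}}$ is independent of $x_k$ --- because $x_k$ influences the dynamics only after $k$ is infected, while $k$'s own post-infection transitions are driven by $x$-blind clocks --- is precisely the fact the paper tacitly invokes when it asserts the sum contains ``precisely $iN$ non-zero copies of $\bold{X}$''.

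Where your submission stops being a proof is exactly at its load-bearing estimate: the claim that each off-diagonal term $\mathbb{E}\bigl[\delta_{\bold{z}_{k,t},\mathrm{I}}\,\delta_{\bold{z}_{l,t},\mathrm{I}}(x_k-R_0)(x_l-R_0)\bigr]$ is $O(1/N)$ is announced (``a coupling / bounded-differences argument \emph{should} show'') but never carried out, and it is not routine: resampling $x_l$ perturbs the pool not just on $l$'s infective days but cascades through every subsequent day up to $T$, so the flip probability $O(\gamma x_l/N)$ for a fixed $k$ needs an amplification bound with a constant depending on $T$ and $\mathbb{E}(\bold{X})$; the dependence between the perturbation size and the weights $(x_k-R_0)(x_l-R_0)$ must be disentangled; and the truncation handling realizations with $x_l$ comparable to $N$ must be uniform in $N$ with only a second moment available. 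To be fair, the paper does not prove this either: its conditional second-moment formula $\mathbb{E}[\bold{p}^2(t)\mid\bold{i}(t)=i]=\bigl((iN)^2-iN\bigr)(\gamma\mathbb{E}(\bold{X})/N)^2+iN\gamma^2\mathbb{E}(\bold{X}^2)/N^2$ silently assumes that, given the value of $\bold{i}(t)$, the marks of the infected are i.i.d., which hides the same cross-correlation you are trying to bound (conditioning on the global count can bias the marks of the currently infected, since large $x_j$'s accelerate the epidemic). So your route is, if completed, the more scrupulous one, but as written it has a hole at its central step. Two smaller omissions: the pool size must be capped at $N$ for step 2 of the model to be well defined, which is why the paper introduces $f_N(x)=\min(\lfloor Nx\rfloor/N,1)$ and Lemma \ref{l3} to conclude $\bold{n}(t)\to\min(\alpha i_d(t),1)$, and correspondingly the standing assumption $i_d(t)<1/\alpha$, without which your limit $N_t/N\to\alpha i_t$ is false as stated.
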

We refer to \ref{convergence} for a more stringent formulation.
As mentioned in Section \ref{priorwork}, classical proofs of similar results are notoriously difficult. Our framework differs from the above mentioned works in that we use a discrete time which, as noted in \cite{diekmann2021discrete}, is quite natural for many biological phenomena, in particular humans that need to sleep on a daily basis. The second key difference is the use of arbitrary offspring distributions.
We have not found any result establishing the same conclusion when the stochastic model also includes an individual offspring distribution, nor have we found a proof that works for SEIR, most references consider SIR. Our proof is interesting in its own right, since it avoids the ``Sellke construction'' \cite{diekmann2012mathematical} and the heavy machinery used by Ethier and Kurz \cite{ethier2009markov}. In fact, the proof is relatively simple in that it only relies on basic concepts from probability theory. It can be considered as an advanced version of the proof of the weak law of large numbers, ultimately boiling down to demonstrating that the variance goes to zero and then applying Chebyshev's inequality. The proof can easily be adapted to other similar models, and it is our hope that this can help improve the understanding of when deterministic models are a good approximation of stochastic counterparts.

\subsection{Evaluation criteria}

In order to answer the questions posed in Section \ref{novelties}, we need to decide on some key features as a basis for model comparison. In all our simulations, we model a city of $10^6$ people, starting with 100 infected individuals, or $0.01\%$ of the population, and set $R_0$ to $1.3$. We compare two epidemic endpoints, the ``final size of the epidemic'' $r(\infty)$, and the ``epidemic wave time'' $T_{\textnormal{wave}}$, described below.

In Figure \ref{fig:example} we show the dynamics of an epidemic computed with a classical SEIR model and with our AB-model. In the upper graph we have plotted the four SEIR time series, respectively Susceptible, Exposed, Infected and Recovered, as given by the classical deterministic model. In dashed black, we also display the Recovered time series $R(t)$ for our AB-model. The final values of the solid pink and dashed black curves are our first feature $r(\infty)$, the ratio of people that have been infected after the ``wave''. The grey area is what we consider the ``epidemic wave'', when the number of recovered $R(t)$ is between $5\%$ and $95\%$ of its final value $r(\infty)$. The width of this grey area corresponds to the epidemic wave time $T_{\textnormal{wave}}$. We remark that the deterministic value of $r(\infty)$ can be obtained by solving \eqref{finalsize}, whereas there is no closed formula regarding $T_{\textnormal{wave}}$, cf.~Ch 3 in \cite{diekmann2012mathematical}.

We study the effect that various type of offspring distributions have on the epidemic endpoints $r(\infty)$ and $T_{\textnormal{wave}}$, since these two values are the key parameters one is typically interested in predicting. In each trial we simulate $10^3$ epidemics and plot the histogram of the stochastic results together with the values computed with classical SEIR, see figures~\ref{fig:e1},  \ref{fig:e2} and~\ref{fig:e3}.

\begin{figure}[ht!]
        \centering
        \includegraphics[width=0.49\textwidth]{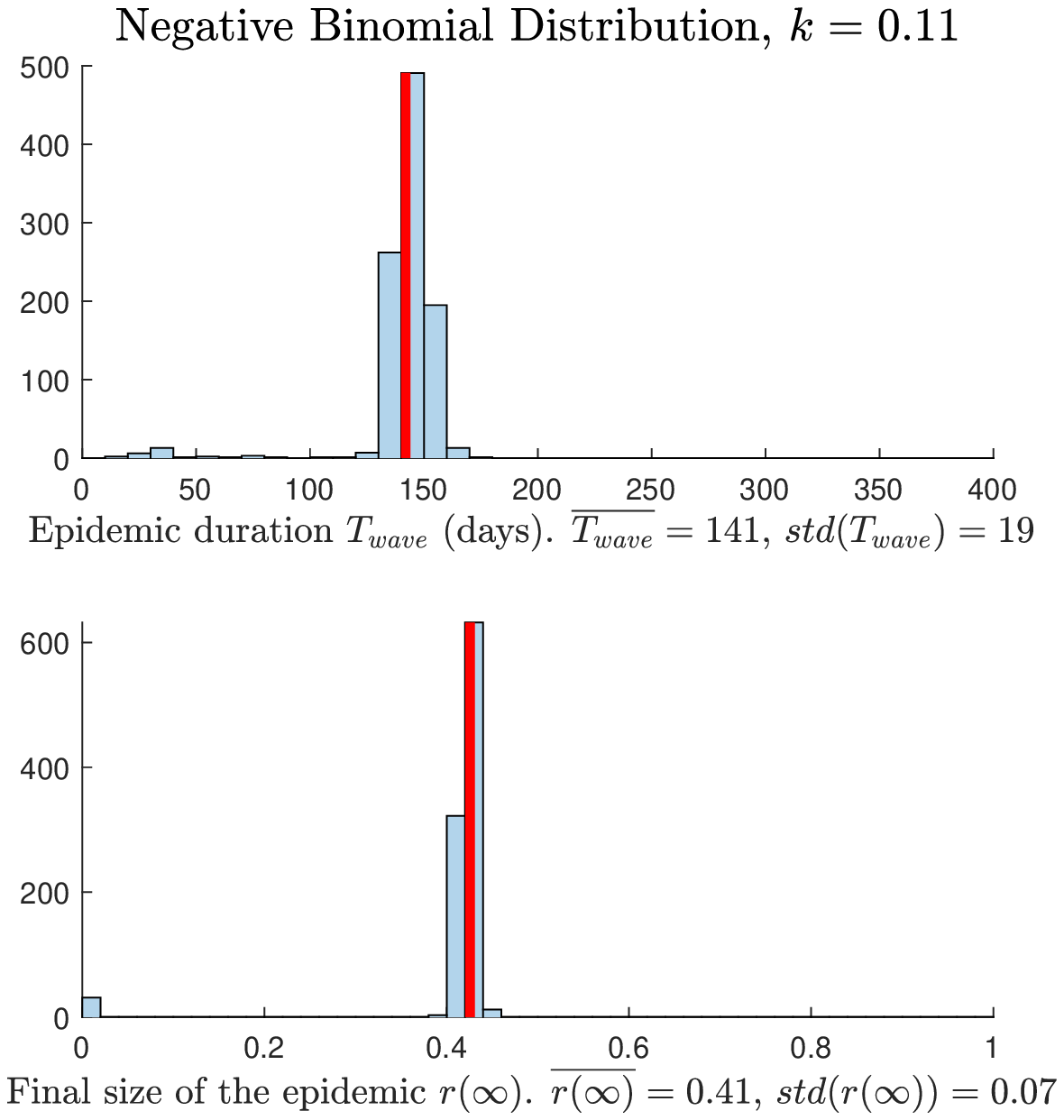}
        \centering
        \includegraphics[width=0.49\textwidth]{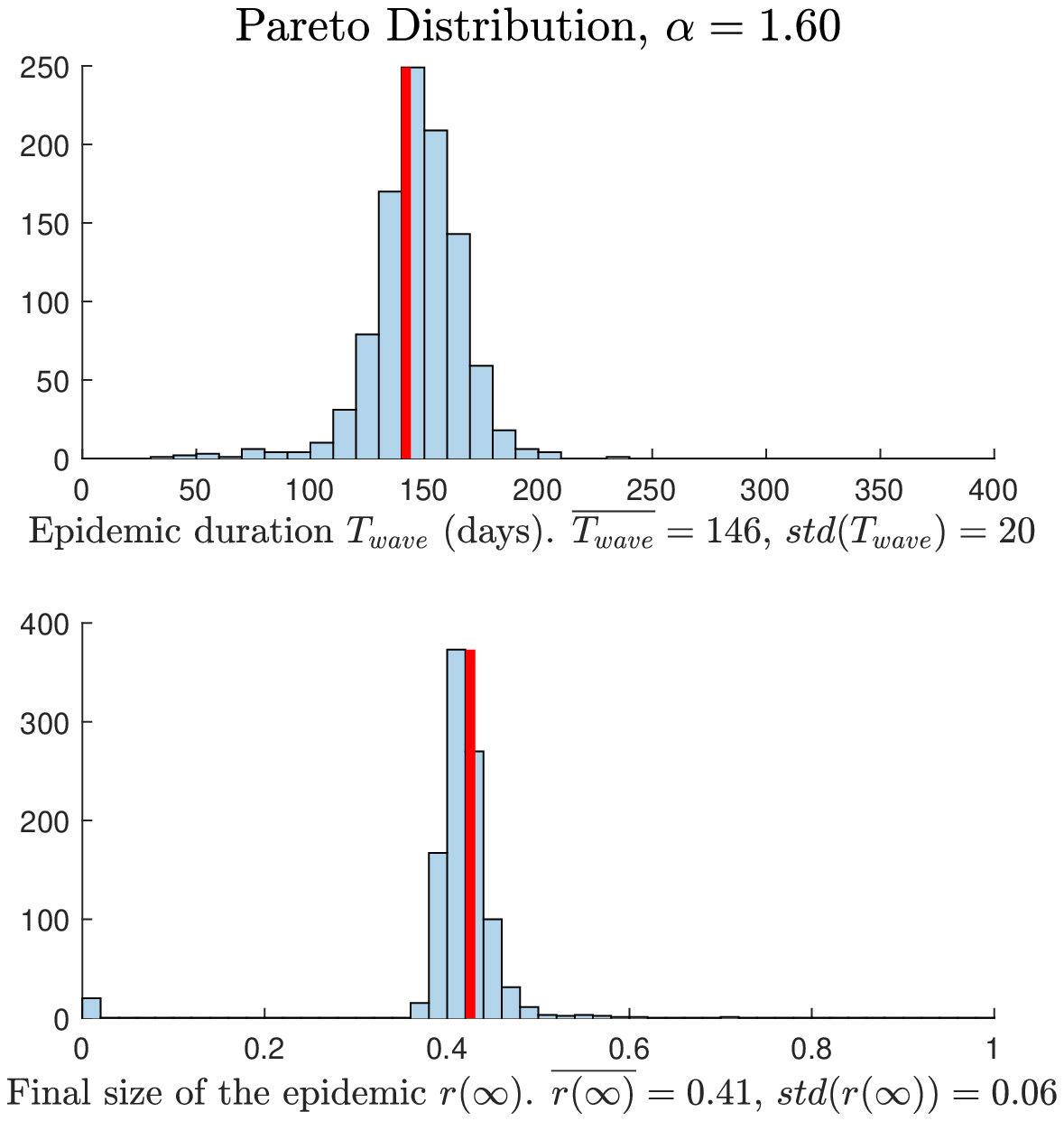}
        \caption{Case (1). The epidemic features are well predicted by the classical SEIR, shown by red solid lines. The histogram in blue shows the results of 1000 simulations with the AB-model.
				\textcolor{black}{Overbars, like $\overline{T_{\textnormal{wave}}}$, denotes averages, and $std$ denotes the standard deviation.} }
        \label{fig:e1}
\end{figure}

\subsection{Numerical Results}\label{results}

In figures~\ref{fig:e1}, \ref{fig:e2} and \ref{fig:e3}, we compare the epidemic features $r(\infty)$ and $T_{\textnormal{wave}}$ of the classical SEIR model, shown using vertical red lines, with the statistical results from a corresponding AB-model, displayed in the histograms.
The epidemic duration $T_{\textnormal{wave}}$ is plotted in the upper figures and the final size $r(\infty)$ in the lower figures. We have simulated epidemic dynamics with all the distributions described in Section~\ref{novelties}, (along with many other distributions and parameter values, which we have decided to omit from the presentation for brevity). The results fall into three overall categories, described below. 

\begin{itemize}
    \item[(1)] For distributions with thinner tails, the AB-model gives the same result as the classical SEIR. The distribution of spreaders is too thin-tailed to influence the epidemic features $r(\infty)$ and $T_{\textnormal{wave}}$ and the AB-model is still mostly deterministic. Occasionally an epidemic goes self-extinct in its wake (minor outbreak), but when this does not happen the epidemic will terminate at a value close to the deterministic $r(\infty)$ in a time frame also well predicted by the deterministic $T_{\textnormal{wave}}$.

        For the Negative Binomial Distribution this happens for $k=0.11$ ($p_{80}=10\%$) and above. Using a Generalised Pareto distribution, we fall into this category with $\alpha \gtrsim 1.6$ ($p_{80}\approx 25\%$), see figure~\ref{fig:e1}.
    \item[(2)] When the distribution tail becomes thicker, minor outbreaks become more common. However, if this does not happen, then its characteristics of the major outbreak are still well predicted by the classical SEIR model. We exemplify this in Figure \ref{fig:e2} using the Negative Binomial Distribution with $k=0.05$ ($p_{80}=5\%$) as well as the Generalised Pareto distribution with $\alpha=1.4$, ($p_{80}=20\%$). As a general trend, the deterministic estimate of $r(\infty)$ has less variance than that for $T_{\textnormal{wave}}$.
    \item[(3)] When we get closer to very fat tailed distributions, with $\alpha \lesssim 1.2$ in the Generalised Pareto distribution (i.e.~$p_{80}\lesssim 10\%$), most of the trials lead to minor outbreaks, see figure~\ref{fig:e3} where around $900$ simulations out of $1000$ obtain $r(\infty) \approx 0$.
\end{itemize}

\begin{figure}[ht!]
        \centering
        \includegraphics[width=0.49\textwidth]{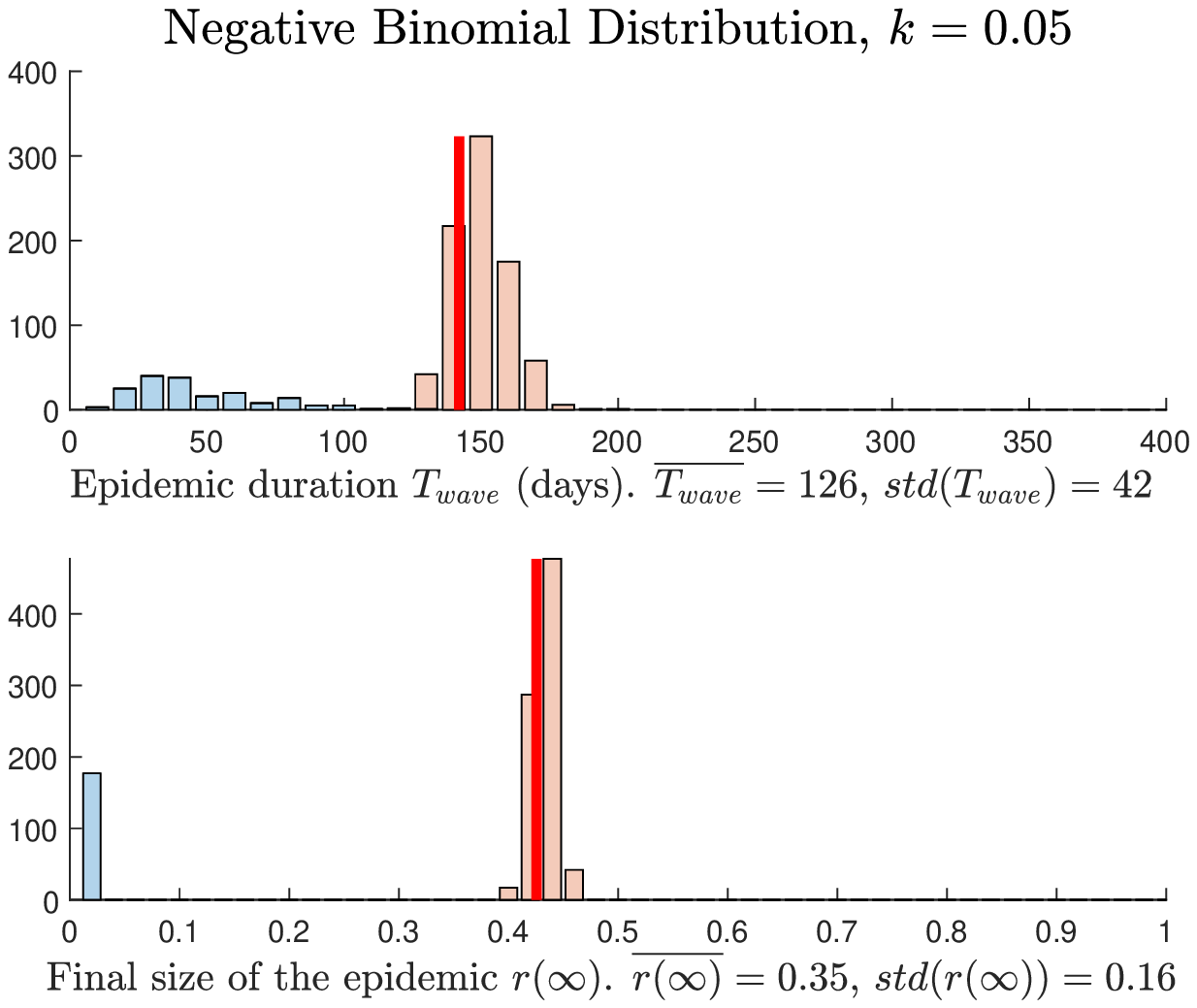}
        \centering
        \includegraphics[width=0.49\textwidth]{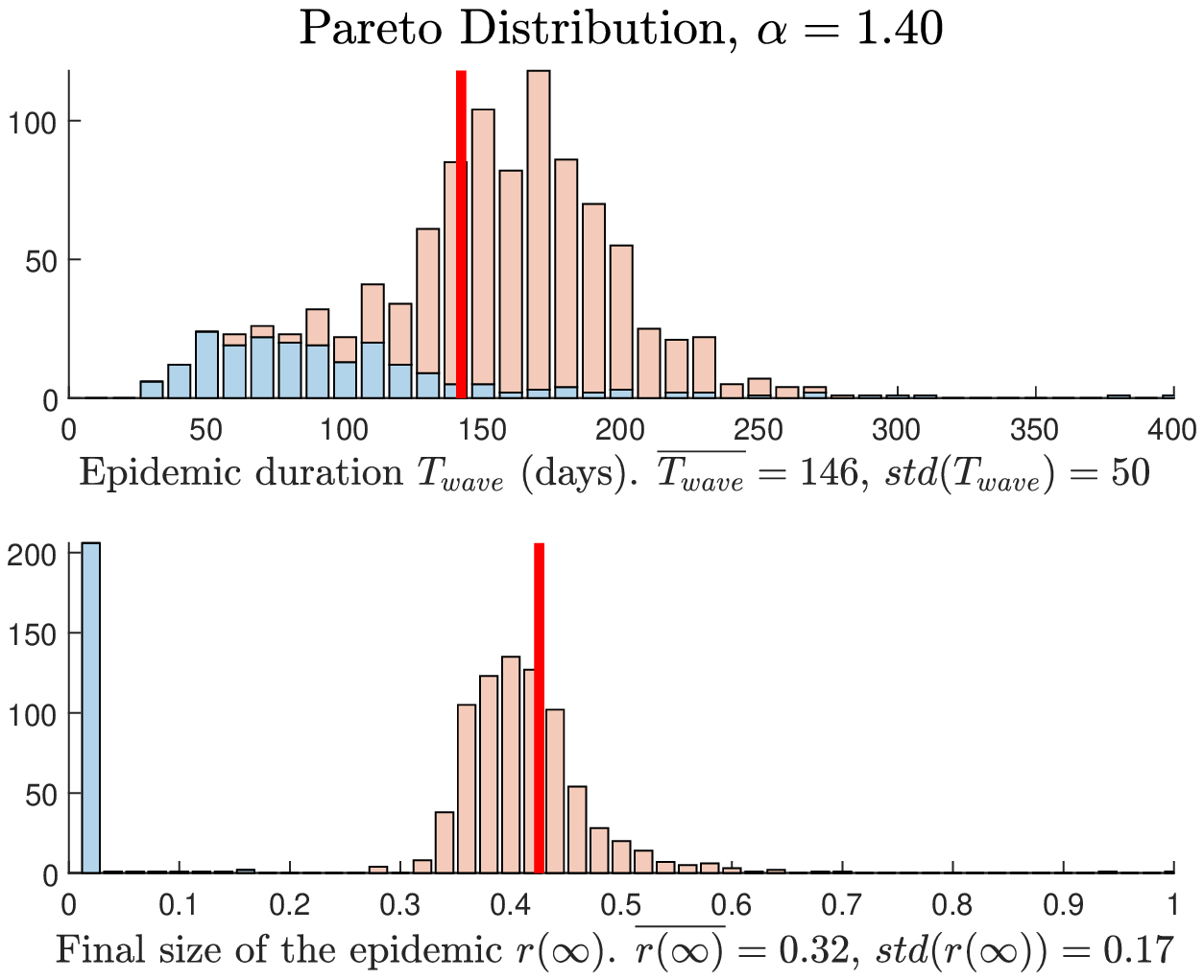}
        \caption{Case (2). A dichotomy appears, either the epidemic becomes self-extinct, or its features are relatively well predicted by the classical SEIR.
				We differentiate between the two kinds of outcomes by color coding, where blue refers to epidemics going self-extinct and orange refers to epidemics with $r(\infty)$ near the deterministic value $r_{\infty}$. }
        \label{fig:e2}
\end{figure}

\begin{wrapfigure}{R}{0.5\textwidth}
 	\begin{center}
       \includegraphics[width=0.49\textwidth]{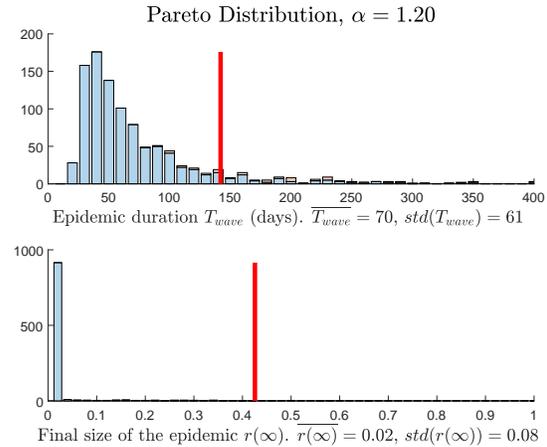}
 	\end{center}
        \caption{Case (3). Most epidemics become self-extinct after infecting only a fraction of the population (minor outbreak). }
        \label{fig:e3}
 \end{wrapfigure}

For the 3-groups distribution, the simulations done with $p_{80}=20\%$, $10\%$ and $5\%$ fall into the case (1). However, for $p_{80}=5\%$, around $80$ epidemics out of $1000$ yield only a minor outbreak, so we see the beginning of a random behavior like in the case (2). As mentioned in Section \ref{priorwork}, the probability of a minor outbreak can be estimated theoretically by solving \eqref{genfunc}. In our experience, this provides a rough estimate of what is observed in the model. For example, with the parameters as in Figure \ref{fig:e2}, the theoretical probability of a minor outbreak was 9\% (based on equation \eqref{genfunc}) whereas we observed roughly 17\% when using the binomial distribution (left). However, for the Pareto distribution (right) the order was reversed, with a theoretical value of 37\% and an observed value of around 23\%. Our modeling in general suggests that thicker tails for the offspring distribution (keeping $R_0$ fixed) lead to higher probability of a minor outbreak. This is not clear to see from equation \eqref{genfunc}, and we have also not verified that this holds theoretically.

To sum up, we see a dichotomy where the outcome of the AB-model either is fairly well predicted by the deterministic SEIR model, or the epidemic becomes self-extinct at an early stage. The parameter $p_{80}$ is not enough to determine how the AB-model will behave; for the 3-groups distribution we need a very low value of $p_{80}$ to end up in Case (2) or (3), for the Negative Binomial Distribution this happens already at $p_{80}\approx 10\%$, and for the Generalised Pareto distribution at $p_{80}\approx 20\%$.

In conclusion, for the deterministic SEIR model, replacing a constant basic reproduction number with a distribution of spreaders has a marginal effect on the epidemic dynamic once it has clearly started (irrespective of whether we are in Case (1), (2) or (3)).
However, it has a major effect on the probability of the epidemic to start or not, at least for distributions that are sufficiently skewed. We have shown that this probability is decreasing with the thickness of the distribution's tail, but we were not able to formulate a definitive measure to predict it.

\subsection{Truncating the Pareto distribution}\label{sec:trunc}

As expected, we found the Generalised Pareto Distribution to produce the most erratic behavior of the distributions tested. In this section we come back to the question of whether the Pareto distribution is suitable for describing the spread of an infectious virus such as SARS-CoV-2. In particular, what is the relevance of extreme events where one individual gives rise to an enormous number of secondary cases? To answer this, we truncated the distribution at 1000 (as well as at 100) and ran the experiment with the AB-model again. In practice, this means that anytime a value $x_i$ larger than 1000 is drawn, we simply redefine $x_i$ to be 1000. The results are displayed in Figure \ref{fig:e4}. Comparing with the untruncated simulations in Figure \ref{fig:e2} (right), we see that the more we truncate, the more the average of $r(\infty)$ gets shifted to the left, and the more $T_{\textnormal{wave}}$ gets shifted to the right.
However, observe that when truncating we change $R_0=\mathbb{E}(X)$, so the observed shifts are completely in line with the general observations of the previous section. More precisely, in both cases we see a dichotomy where the epidemic either becomes self-extinct, or leads to a major outburst whose magnitude and duration are well predicted by the deterministic SEIR using the updated $R_0$-value.

\begin{figure}[ht!]
        \centering
        \includegraphics[width=0.49\textwidth]{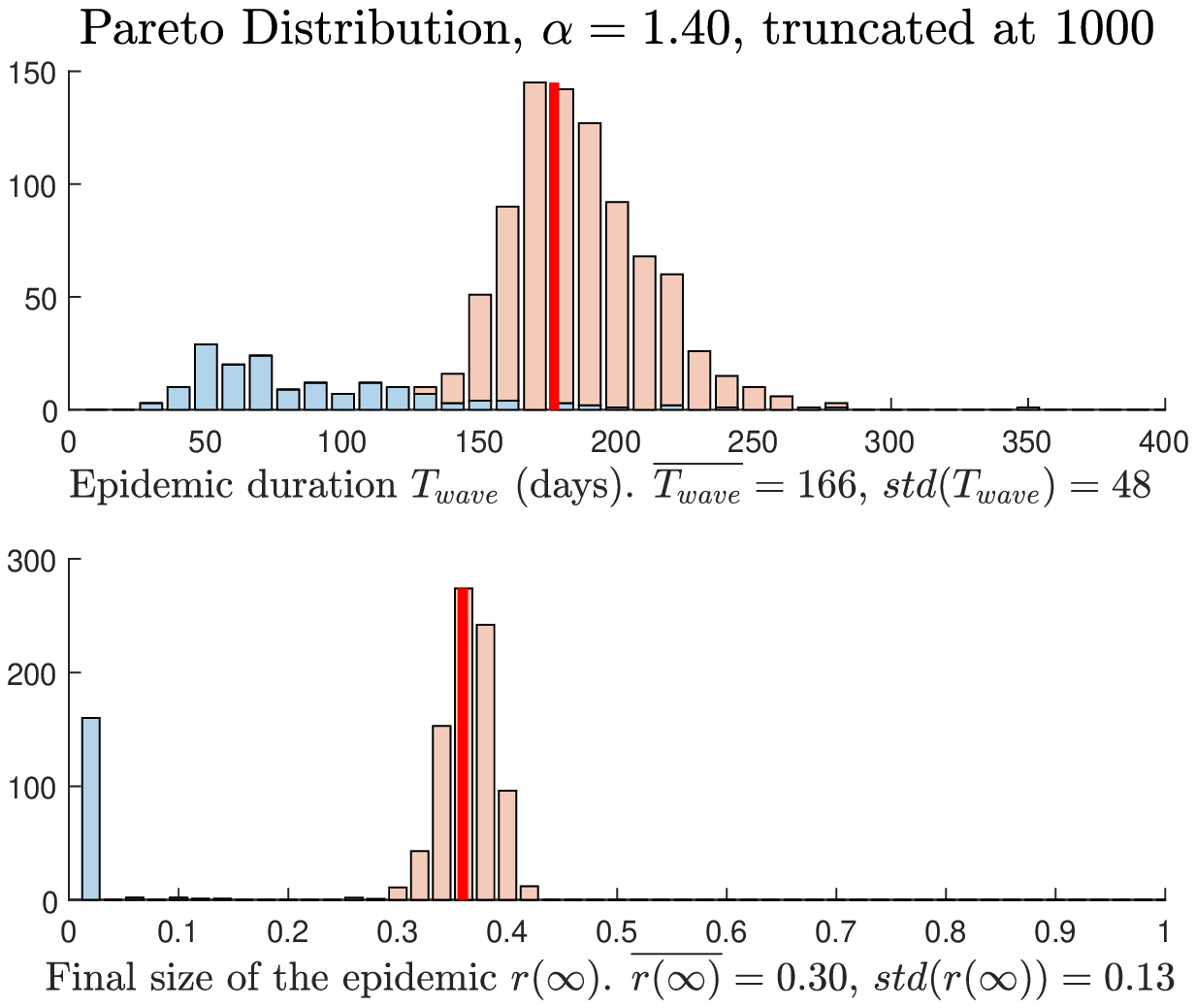}
        \centering
        \includegraphics[width=0.49\textwidth]{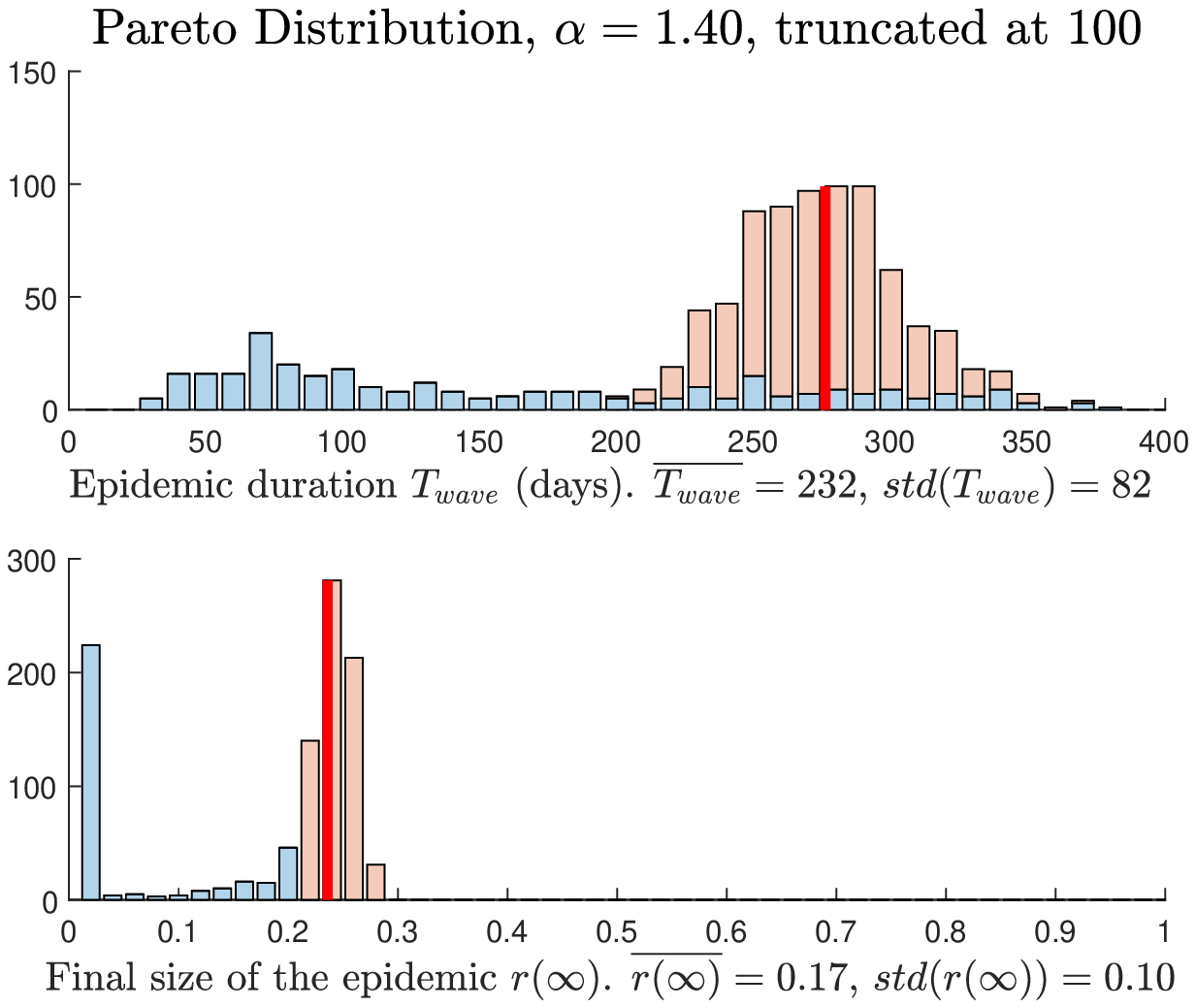}
        \caption{By truncating we shift the expectation and hence the $R_0$-value. However, we observe the same dichotomy as described in the previous section.}
        \label{fig:e4}
\end{figure}

\section{Discussion}

The COVID-19 pandemic has been notoriously hard to predict with deterministic models for spread of infectious diseases. We have investigated numerically whether this could be related to the existence of super-spreaders, but found that this does not seem to be the case.
We tested the difference between the classical SEIR model and an AB-model using several fundamentally different offspring distributions, from fat-tailed ones to exponentially decaying ones to distributions with finite support. For each given distribution we kept $R_0$ fixed and measured skewness by the parameter $p_{80}$. Irrespective of distribution used, we identified a similar qualitative pattern;
For large enough $p_{80}$ the behavior of the AB-model is always well predicted by the deterministic SEIR, with somewhat larger fluctuations for the predicted $T_{\textnormal{wave}}$ than the final size $r(\infty)$. This is followed by a transition zone where the above conclusion is still true whenever the model leads to a substantial outbreak, and when this does not happen we only see a minor outbreak that becomes self-extinct. For small enough $p_{80}$, the epidemics almost never start, hence this case is too extreme and not relevant for understanding the spread of SARS-CoV-2.

Our numerical observations are in sync with related works and theoretical predictions, both from simpler models and our own proof that the AB-model 1 converges to SEIR as the population size approaches infinity.
As discussed in Section \ref{reasonable}, there is great uncertainty about the actual shape of the offspring distribution, which also likely is country specific. Moreover, the precise value of $p_{80}$ remains unknown, due to the large amount of spread that is not registered in contact tracing studies. A strength of the conclusions of this paper is thus that they are independent of the type of distribution as well as the value of $p_{80}$.
In addition, we have also confirmed numerically that given an overdispersed offspring-distribution, the relative chance of the epidemic going self-extinct is increased even if $R_0>1$, with the likelihood of self-extinction being negatively correlated to $R_0$ and positively correlated to a low $k$-value or, more generally, a fat tail.

\section*{Statements and Declarations}
\subsection*{Funding}
Authors F.~R.~and M.~\"{O}.~received financial support from Carl Tryggers foundation and RR-ORU-2021/2022. Remaining authors received no support for this work.
\subsection*{Competing interests}
All authors declare no competing interests.
\subsection*{Author contributions}
All authors contributed to the study conception and design. B.~K.~W.~came up with the original research question, M.~C.~contributed with model design and analysis, M.~\"{O}. and F.~R.~did the actual modeling. We thank S. von Blixen for proofreading.
\subsection*{Ethics approval}
Ethics approval is not needed for studies on mathematical modeling.
\subsection*{Consent to participate and publish}
The study included no human subjects.
\subsection*{Data availability statement}
The code used to produce the figures are available from the corresponding author upon reasonable request.


\newpage

\section{Appendix}\label{sec:appendi}

\subsection{The Pareto distribution}\label{sec:pareto}

Here we describe in detail the generalised Pareto distribution used in the main text. This distribution has two parameters: scale $\sigma$, and shape $\alpha$ (the location $\mu$ is set to 0). The shape parameter is traditionally called $\xi$ and is related to $\alpha$ by $\alpha=1/\xi$, but we have decided to work with $\alpha$ since this is easier to interpret for non-specialists.
The continuous PDF (Probability Distribution Function) of the generalised Pareto distribution is given by

\begin{equation}\label{eq:p_alpha}
    f_{\alpha,\sigma}(x)=\frac{1}{\sigma}\left(1+\frac{x}{\alpha\sigma}\right)^{-\textcolor{black}{(\alpha+1)}},\quad x>0,
\end{equation}
and the expectation of a corresponding random variable $X_{\alpha, \sigma}$ is then

\begin{equation}\label{eq:hy}
    R_0=\mathbb{E}(X_{\alpha,\sigma})=\frac{\alpha\sigma}{\alpha-1}.
\end{equation}
The distribution becomes ``fat-tailed'' when its variance becomes infinite, here when $\alpha<2$. The scale parameter $\sigma=\sigma(\alpha)$ will always be chosen so that $R_0=1.3$. We have gathered in Table~\ref{tab:probTableGP} several points of the Cumulative Distribution Function with different $p_{80}$ and shape parameter $\alpha$.

\begin{table}[!h]
    \begin{center}
        \begin{tabular}{ | c | c | c | c | c | c |}
        \hline
         $p_{80}$ & $\alpha$  & $P(X \geq 1)$ & $P(X \geq 10)$ & $P(X \geq 100)$ & $P(X \geq 1000)$ \\
         \hline
          5 & 1.125 &    10.6\%    & 0.11\% &  8.16e-4\%  & 6.14e-6\% \\
         \hline
        10 & 1.197 &      14.2\%  &   0.14\% &  9.41e-4\%  & 6.01e-6\% \\
         \hline
        10.2 & 1.2 &      14.3\%  &   0.142\% &  9.43e-4\%  & 5.98e-6\% \\
         \hline
        19.6 & 1.4 &      20.5\%    & 0.196\% &  8.77e-4\%  & 3.53e-6\% \\
         \hline
        20 & 1.415 &      20.8\%    & 0.20\% &  8.16e-4\%  & 3.36e-6\% \\
         \hline
        24.8 & 1.6 &      24.0\%    & 0.222\% &  6.65e-4\%  & 1.70e-6\% \\
         \hline
        34.2 & 2.5 &      30.1\%    & 0.225\% &  1.24e-04\%  & 4.17e-8\% \\
         \hline
         42 & 10 &      34.7\% &   0.096\%  &  1.42e-9\%  &  4.23e-20\% \\
           \hline
        \end{tabular}
    \end{center}
\caption{Probability chart of the Generalised Pareto Distributions for various values of $p_{a}$.}
\label{tab:probTableGP}
\end{table}

\textcolor{black}{The continuous cumulative distribution function $F_{\alpha,\textnormal{cont}}(x)=\int_0^x f_{\alpha,\sigma(\alpha)}(y)dy$ is then given by $1-\left(1+\frac{x}{\alpha\sigma}\right)^{-\alpha}$.}
We discretise this to the set of integers by simply setting $F_{\alpha}(n)=F_{\alpha,\textnormal{cont}}(n)$, which leads to the discrete probability density function
\begin{equation}
\label{eq:p_alpha_disc}
p_{\alpha}(n)=F_{\alpha}(n+1)-F_{\alpha}(n).
\end{equation}
The distinction between $F_{\alpha}$ and $F_{\alpha,\textnormal{cont}}$ is a major change which e.g.~completely alters the formula~\eqref{eq:hy}. In the discretised case there is no closed form expression for the expectation, and hence we need to use a small numerical routine to find an appropriate value of $\sigma$ (i.e.~such that $R_0=\mathbb{E}(X_{\alpha})=\sum_{n=0}^\infty np_{\alpha}(n)$ holds).

As argued in the main text, it seems unrealistic that one individual can give rise to more than, say $100$ or $1000$ new cases, so to test how much the tail beyond a certain number $N_0$ affects the performance of the AB-model, we introduce a second truncated distribution $F_{\alpha,trunc}$ by simply redefining $F_{\alpha}$ so that $F_{\alpha,trunc}(n)=1$ for all $n>N_0$, i.e., $p_\alpha(n>N_0)=0$ according to~(\ref{eq:p_alpha_disc}).
This interference will lower the value of the expectation (i.e.~$R_0$), which of course could be countered by adjusting $\sigma$, but in the examples displayed in the main text we chose not to do so, in order to better see the influence of the truncation on a concrete distribution (see Figure \ref{fig:e4}). In conclusion, we see that the truncation, whether at 100 or at 1000, does not alter the overall behavior of the AB-model in terms of the key features $T_{\textnormal{wave}}$ and $r(\infty)$, except for a shift that is completely predicted by the updated value of $R_0.$

\subsection{Agent Based model 1}\label{abm1appendix}
\textcolor{black}{
Since the main features of this model is described in Section \ref{abm1}, we shall not repeat it here. It remains to prove the claim that $\mathbb{E}(T)=T_{\textnormal{infective}}$ which was used in equation~(\ref{andorra}). Based on Step 4 in the algorithm, we see that the probability of being cured after one day only is $\gamma$. The probability of being cured after two days is $(1-\gamma)\gamma$, three days $(1-\gamma)^2\gamma$ and so on. Thus $$\mathbb{E}(T)=\sum_{t=0}^\infty t\gamma(1-\gamma)^{t-1}=\gamma\frac{d}{d\gamma}\left(-\sum_{t=0}^\infty (1-\gamma)^{t}\right)=\gamma\frac{d}{d\gamma} \left(-\frac{1}{1-(1-\gamma)}\right)=\gamma\frac{d}{d\gamma}\left( -\gamma^{-1}\right)=\gamma^{-1}=T_{\textnormal{infectious}},$$
as desired.}

\subsection{Agent Based model 2}\label{abm2}

\begin{figure}[ht!]
        \centering
        \includegraphics[width=0.49\textwidth]{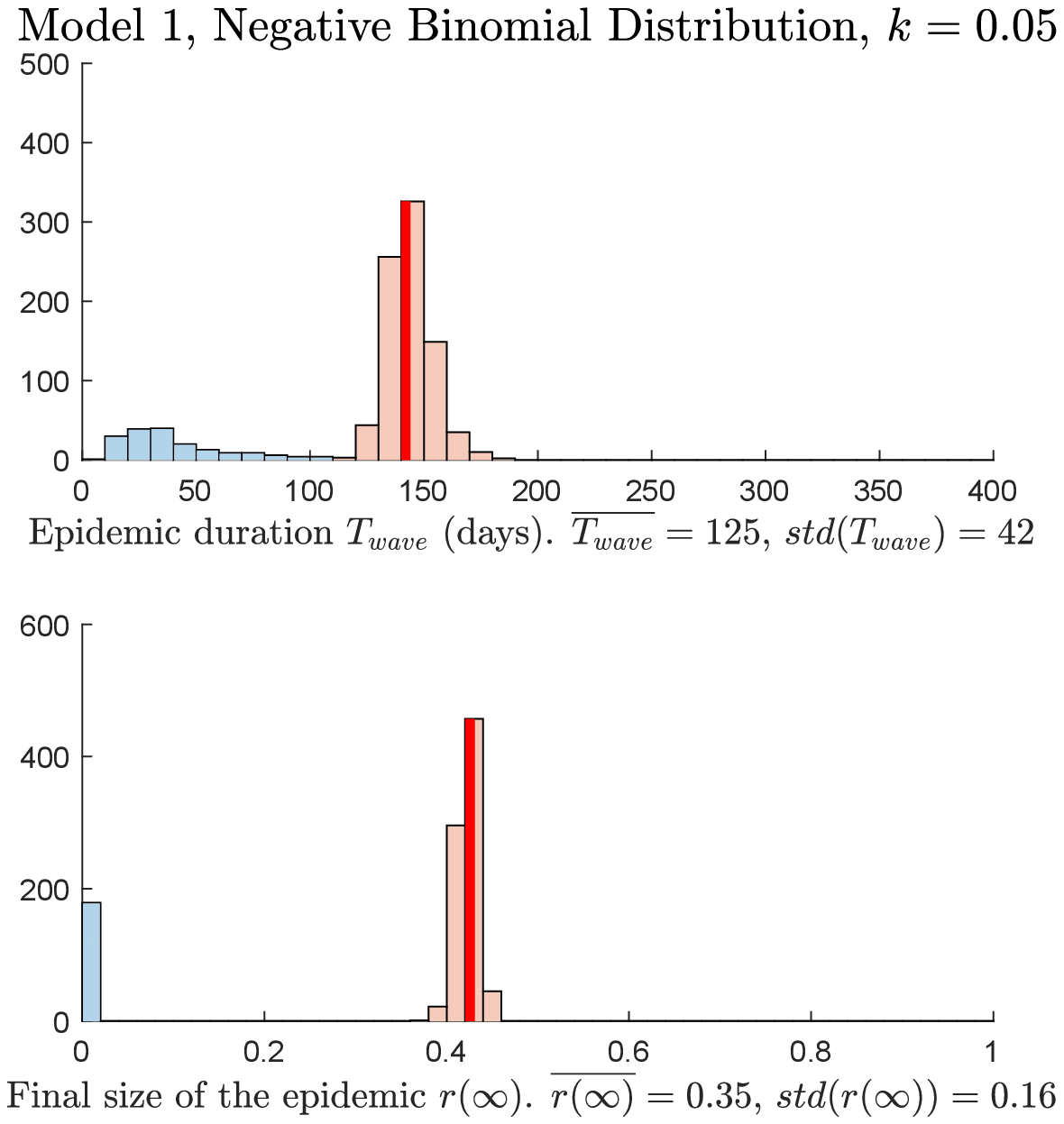}
        \centering
        \includegraphics[width=0.49\textwidth]{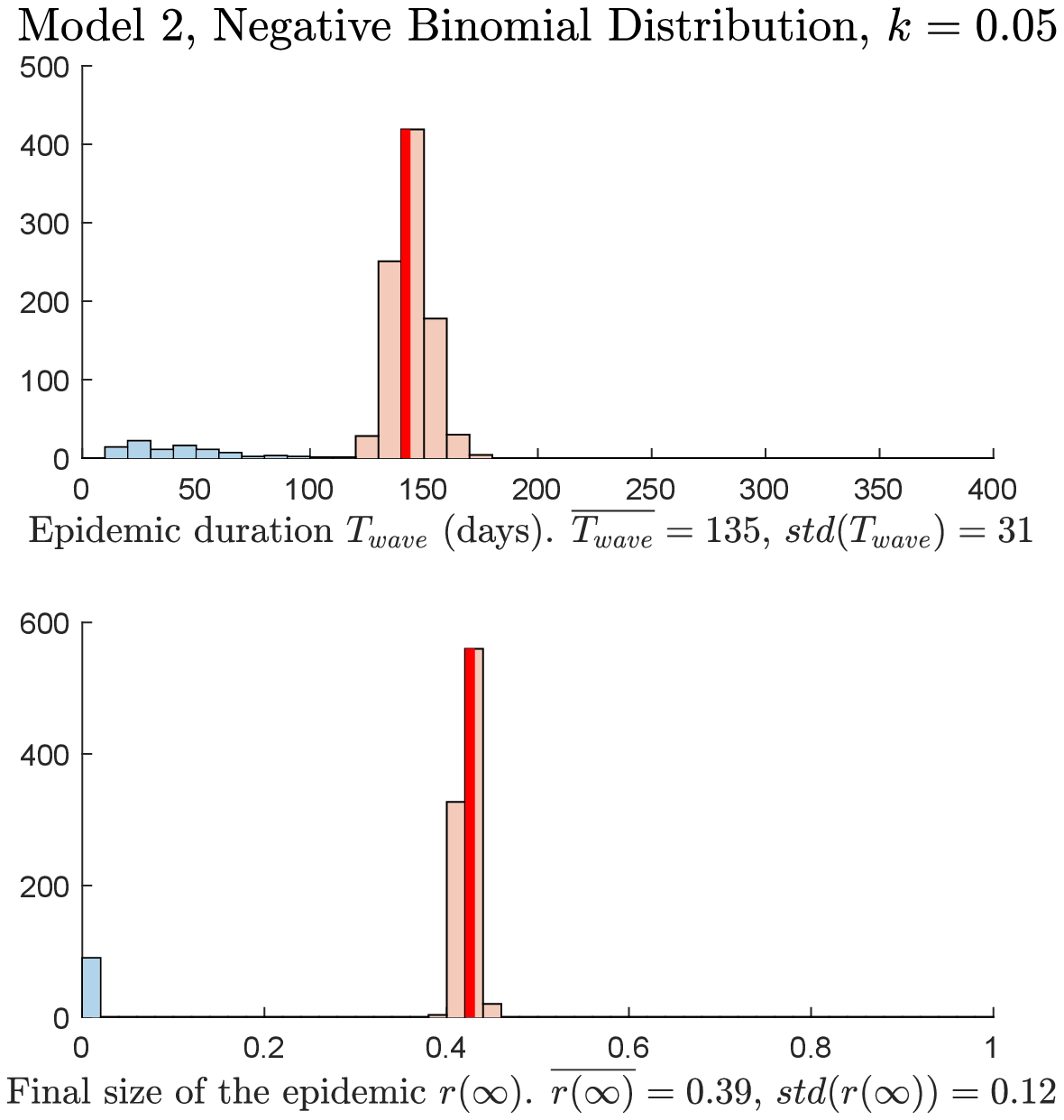}
        \caption{Comparisons of the different AB-models with the same parameters.}
        \label{fig:SnuR}
\end{figure}

In this section we describe our variation of the agent based model presented in Section \ref{abm1}, with the key difference that this model takes the age of infection into account. \textcolor{black}{To this end, we need an estimate of the PDF of the serial interval $T_{SI}$, i.e.~a function $p_{T_{SI}}$ such that $p_{T_{SI}}(t)$ is the probability that onward transmission of the virus happens on day $t$ after a given infected individual got infected. We used an estimate of $T_{SI}$ from \cite{bi2020epidemiology}, where in particular we set $p_{T_{SI}}(0)=0$.}
As in Section \ref{abm1} we model a population of $N$ individuals, each individual $k$ having a state-variable $z_{k,t}$ which keeps track of the current status of that individual. However, instead of $s$, $e$, $i$ and $r$, we now introduce the states $\mathrm{S}$ (meaning Susceptible as before), $\nu$ (for newly infected) and $\mathrm{R}$ where $\mathrm{R}$ stands for ``Removed'' as opposed to ``Recovered''. The reason for this semantic change is that
with this construction a person can only be in the $\nu$-group one day, and is automatically moved to $\mathrm{R}$ on the subsequent day, and hence the $\mathrm{R}$-group will also contain people that are still sick and even pre-symptomatic.

We introduce a function $n$ on the non-negative integers, where $n(t)$ is the amount of potentially infectious contacts on the given day $t$. This function takes the role of the variable $N_{t}$ in AB-model 1, whose value only depended on the amount of infectious people on day $t$. The value $n(t)$, in contrast, depends on the entire times series prior to day $t$.
\textcolor{black}{At each time step $t$ we compute $z_{k,t+1}$ and update $n$ as follows:
\begin{enumerate}
    \item For each infected agent, i.e.~$k$ with $z_{k,t}=\nu$, we draw a number $x_k$ from the offspring distribution.
    \item We then draw $x_k$ integers $\tau_1,\ldots,\tau_{x_k}$ from the $T_{SI}$ distribution. Note that these numbers by construction are non-zero.
    \item For each such value $\tau_j$ we add $+1$ to the function $n$ on day $\tau_j+t$. Formally, $$n(\tau):=n(\tau)+\sum_{j=1}^{x_k}\delta_{\tau,\tau_j+t}.$$
    \item All individuals in state $\nu$ are given state $\mathrm{R}$.
    \item $n(t)$ individuals are drawn randomly among the global population. Among these, those that are in state $\mathrm{S}$ are moved to state $\nu$ on day $t+1$. Formally, if individual $m$ is drawn and $z_{m,t}=\mathrm{S}$, we set $z_{m,t+1}=\nu$.
\end{enumerate}}

The above model is a stochastic version of the integral equation type models found e.g.~in \cite{hethcote1980integral} and \cite{lloyd2001destabilization} or Chapter~4 of \cite{brauer2019mathematical}.
Before we can compare this model 2 with the previous one, we need to determine suitable values for $T_{\textnormal{incubation}}$ and $T_{\textnormal{infective}}$ (used in AB-model 1) in accordance with the PDF for $T_{SI}$.
A standard analysis (see e.g., the supplementary material of~\cite{carlsson2021mathematical}) yields that the average time for the serial interval in the SEIR model becomes $T_{\textnormal{incubation}}+T_{\textnormal{infective}}$, and hence it is important for a comparison to run model 1 with values such that $$T_{\textnormal{incubation}}+T_{\textnormal{infective}}=\mathbb{E}(T_{SI}).$$
When this condition is met, we have found that there is only marginal difference in the histogram output of model~1 and~2. In Figure~\ref{fig:SnuR} we show the output of the two models using the Negative Binomial Distribution with $k=0.05$. The only notable difference in the figure is that there seems to be a somewhat higher probability that the epidemic becomes self-extinct when using model~1 (this observation seems to be a general trend). However, the overall behavior follows the same pattern. In particular, we see the dichotomy between the epidemic either becoming self-extinct or its key features $T_{\textnormal{wave}}$ and $r(\infty)$ being well predicted by the classical SEIR. Hence the conclusions of the main text will be identical independent of the particular stochastic model.

\subsection{Convergence of AB-model 1 to discrete SEIR}\label{convergence}
Given a fixed (small) $\epsilon>0$ let $s_d$, $e_d$, $i_d$ and $r_d$ be the deterministic solution to the discrete version of the equation system \eqref{eq:seir} implemented with
a step size of one day, i.e.
\begin{equation}\label{eq:seir2}
\left\{\begin{array}{l}
        s_d(t+1)= s_d(t)  -\alpha s_d(t)i_d(t)\\
        e_d(t+1)= e_d(t) + \alpha s_d(t)i_d(t)-\sigma e_d(t) \\
        i_d(t+1)=i_d(t) + \sigma e_d(t)-\gamma i_d(t) \\
        r_d(t+1)=r_d(t) + \gamma i_d(t)
       \end{array}
\right. , \end{equation}
with initial condition $e_d(0)=r_d(0)=0$, $i_d(0)=\epsilon$ and $s_d(0)=1-\epsilon$. We remark that the difference between solutions to this implementation and the corresponding ODE \eqref{eq:seir} is marginal and completely irrelevant for the tests in this paper. For an interesting account of how to discretize ODE's like \eqref{eq:seir}, in particular with longer step-lengths, see \cite{diekmann2021discrete}.

In the AB-model 1 (recall Section \ref{abm1}), let $$\{\bold{z}_{k,t}\}_{k=1}^N$$ be random variables describing the state $\{S,E,I,R\}$ of individual $k$ on day $t$. Set $\bold{S}(t)=\sum_{k=1}^N\delta_{\bold{z}_{k,t},\mathrm{S}}$ and define $\bold{E}$, $\bold{I}$ and $\bold{R}$ analogously. These are then random variables (with values in $\{0,1,\ldots,N\}$) that implicitly depend on the $N$ random variables $\{\bold{z}_{k,t}\}_{k=1}^N$ (we use bold letters for random variables). Finally, introduce $\bold{s}=\bold{S}/N$ and define $\bold{e}$, $\bold{i}$ and $\bold{r}$ analogously. The key question posed in this paper is whether the deterministic model \eqref{eq:seir2} is a good approximation of the average behavior of the random AB model 1, with $N=10^6$ and $\epsilon=100/10^6$, for a variety of offspring distributions that may reflect the actual spread pattern of SARS-CoV-2.

In this section, we prove mathematically that the random model converges in probability to the deterministic one, as $N\rightarrow\infty$. Note that this result does not answer the key questions posed, namely what happens when $N$ is fixed at a finite value realistic for a medium-large city. In the below result, we initialize the AB-model 1 by randomly selecting $\lfloor\epsilon N\rfloor$ infectives on day $t=0$, letting the remaining variables $z_{k,t}$ equal $\mathrm{S}$, (where $\lfloor\epsilon N\rfloor$ denotes the integer part of $\epsilon N$).

\textcolor{black}{It is important that $\epsilon$ is chosen small enough that $i_d(t)<1/\alpha$ for all values of $t$. For ``reasonable'' parameter values, this is always the case, but still this assumption is necessary since otherwise we will get negative values for $s_d$ in \eqref{eq:seir2}, which is absurd. Another technicality (which significantly complicates the proof) is that the number $N_t$, representing the amount of potentially infectious contacts in step 1 of AB-model 1, in theory can exceed $N$, in which case the subsequent steps are not well defined. In all our trials with this model $N_t$ was always way below $N$, but still we need to add the rule that $N_t$ is set to $N$ if this happens. With these modifications we have:}

\begin{theorem}\label{t1}
Assume that $\bold{X}$ has finite variance. Let $T$ be a fixed integer \textcolor{black}{and assume that $i_d(t)<1/\alpha$ holds for all $t=1,\ldots,N$}. The random variables $\bold{s},~\bold{e},~\bold{i}$ and $\bold{r}$, initialized as above, converge in probability to $s_d,~e_d,~i_d$ and $r_d$ for all $t$ in the time interval $\{1,\ldots,T\}$. More precisely, fix any $\delta>0$ and let $\Omega_N$ be the set of realizations $(s,e,i,r)$ such that $|s(t)-s_d(t)|<\delta$, $|e(t)-e_d(t)|<\delta,$ $|i(t)-i_d(t)|<\delta$ and $|r(t)-r_d(t)|<\delta$ holds for all $ t=1,\ldots,T.$
Then $\lim_{N\rightarrow\infty}\mathbb{P}(\Omega_N)=1.$
\end{theorem}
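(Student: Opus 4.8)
The plan is to run an induction over the finitely many time steps $t=1,\ldots,T$, treating each update as a generalized law of large numbers: conditioned on the state at day $t$ lying close to the deterministic trajectory, I would show that the increments produced by Steps 1--4 have conditional means reproducing the discrete SEIR update \eqref{eq:seir2} and, after normalizing by $N$, conditional variances of order $1/N$. Chebyshev's inequality then forces the normalized increments to concentrate. Concretely, writing $\mathcal F_t$ for the $\sigma$-algebra generated by the process up to day $t$, I would fix a tolerance $\delta'<\delta$ and define the ``good event'' $G_t$ that all four fractions lie within $\delta'$ of $s_d,e_d,i_d,r_d$; the inductive step is to show $\mathbb P(G_{t+1}^c\mid G_t)\to 0$, so that a union bound gives $\mathbb P(\bigcap_{t\le T}G_t)\to 1$.

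\textbf{The increment computation.} On $G_t$, Steps 2--4 are elementary sampling mechanisms conditionally on $\mathcal F_t$. The number of newly exposed is, given $N_t$ and the counts, a hypergeometric count (the susceptibles among $N_t$ agents drawn without replacement), with mean $N_t\,\bold S(t)/N$ and variance $O(N)$, while the $\mathrm E\!\to\!\mathrm I$ and $\mathrm I\!\to\!\mathrm R$ transitions are binomial with parameters $(\bold E(t),\sigma)$ and $(\bold I(t),\gamma)$, again of variance $O(N)$; dividing by $N$ each contributes variance $O(1/N)$. The only nontrivial ingredient is $N_t=\lfloor\gamma\,\bold M_t\rfloor$ with $\bold M_t:=\sum_k\bold x_k\,\delta_{\bold z_{k,t},\mathrm I}$: I must show $\bold M_t$ concentrates at $R_0\,\bold I(t)$, for then $N_t/N\approx\alpha\,\bold i(t)$ and the mean number of new exposures is $\approx\alpha\,\bold s(t)\bold i(t)N$, exactly the deterministic term (recall $\alpha=\gamma R_0$). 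The floor costs at most $1$ and is negligible, and the hypothesis $i_d(t)<1/\alpha$ guarantees $N_t<N$ in the limit, so the artificial cap $N_t\le N$ is never active.

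\textbf{The key lemma, and the main obstacle.} Everything hinges on $\bold M_t=\sum_k\bold x_k\,\delta_{\bold z_{k,t},\mathrm I}$. The crucial structural observation is that an individual's own reproduction number $\bold x_k$ is never consulted by the dynamics until that individual is infectious, whereas the events carrying $k$ from $\mathrm S$ to $\mathrm I$ (uniform selection in Step 2, and the independent $\sigma$- and $\gamma$-coins) do not depend on $\bold x_k$; hence $\delta_{\bold z_{k,t},\mathrm I}$ is independent of $\bold x_k$. Consequently $\mathbb E\,\bold M_t=R_0\,\mathbb E\,\bold I(t)$ and the diagonal part of the variance is $\mathrm{Var}(X)\,\mathbb E\,\bold I(t)=O(N\,\mathrm{Var}(X))$ --- this is the single place the \emph{finite variance} hypothesis enters, and it is precisely what fails for fat-tailed $X$. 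The genuine difficulty is the off-diagonal covariances $\mathrm{Cov}\big(\bold x_k\,\delta_{\bold z_{k,t},\mathrm I},\,\bold x_l\,\delta_{\bold z_{l,t},\mathrm I}\big)$, which are positive because a large $\bold x_k$ enlarges the global contact pool and thereby slightly raises the chance that $l$ becomes infected. The key estimate is that a single individual perturbs any other fixed individual's infection probability by only $O(\bold x_k/N)$, since that individual's contacts are spread over a pool of size comparable to $N$; this renders each pairwise covariance $O(1/N^2)$, so the $\sim N^2$ pairs contribute only $O(1)$, negligible against the diagonal $O(N)$. Proving this $1/N$-dilution rigorously, uniformly over the history on $G_t$, is the heart of the argument and the main technical obstacle.

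\textbf{Conclusion.} With $\mathrm{Var}(\bold M_t/N)=O(1/N)$ in hand, Chebyshev gives $N_t/N\to\alpha\,\bold i(t)$ in probability on $G_t$; combined with the $O(1/N)$ variances of the hypergeometric and binomial increments, each normalized compartment count at day $t+1$ has conditional mean within $o(1)$ of the deterministic update and variance $O(1/N)$. Since the deterministic map is Lipschitz on the relevant domain, a $\delta'$ error at day $t$ together with an $o(1)$ fluctuation yields a controlled error at $t+1$; a final application of Chebyshev gives $\mathbb P(G_{t+1}^c\mid G_t)\to 0$. Choosing $\delta'$ small enough that the $T$-fold propagation of these errors stays within $\delta$ then closes the induction and yields $\lim_{N\to\infty}\mathbb P(\Omega_N)=1$.
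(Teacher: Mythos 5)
Your overall architecture---induction over the finitely many days, a hypergeometric count for the $\mathrm{S}\to\mathrm{E}$ transition, binomial coins for $\mathrm{E}\to\mathrm{I}$ and $\mathrm{I}\to\mathrm{R}$, and conditional mean/variance estimates closed by Chebyshev---coincides with the paper's proof of Theorem \ref{t1}, and your structural observation that $\bold{x}_k$ is independent of the indicator $\delta_{\bold{z}_{k,t},\mathrm{I}}$ (an individual's own draw is never consulted before s/he is infective) is exactly the fact underlying the paper's Proposition \ref{p1}. But your attempt diverges at, and stalls on, the decisive step: the concentration of $\bold{M}_t=\sum_k\bold{x}_k\delta_{\bold{z}_{k,t},\mathrm{I}}$. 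You reduce it to an off-diagonal covariance bound via an influence heuristic (``one individual perturbs another's infection probability by $O(\bold{x}_k/N)$''), explicitly declare this the main technical obstacle, and do not prove it. As a proof this is a genuine gap, and not a routine one: conditioning on the good event $G_t$, or on any history information such as past pool sizes, tilts the conditional law of the $\bold{x}_k$'s of currently infective individuals---precisely the dependence you are trying to control---so ``uniformly over the history on $G_t$'' is a substantive demand, not a formality. Your claimed per-pair order is also optimistic: the sketch you give yields a covariance of order $\mathbb{E}(\bold{X}^2)/N$ per pair, i.e.\ $O(1/N)$ rather than $O(1/N^2)$; this still gives $\mathbb{V}ar(\bold{M}_t/N)=O(1/N)$ after summing, but the bookkeeping must actually be carried out.

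It is worth comparing how the paper closes this same step. It never conditions on the history or on a good event: all statements are phrased as unconditional convergence in probability, and the only conditioning is on the single variable $\bold{i}(t)$. Proposition \ref{p1} computes $\mathbb{E}[\bold{p}(t)\mid\bold{i}(t)=i]=\alpha i$ and $\mathbb{V}ar(\bold{p}(t)\mid\bold{i}(t)=i)=i\gamma^2\mathbb{V}ar(\bold{X})/N$ by treating the $iN$ nonzero summands as i.i.d.\ copies of $\bold{X}$, and then applies the laws of total expectation and total variance, with Lemmas \ref{l1}, \ref{l2} and \ref{l3} plus Proposition \ref{p2} handling products, the floor, and the cap $f_N$ (where your remark that $i_d(t)<1/\alpha$ keeps the cap inactive matches the paper). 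This device avoids any pairwise covariance analysis---but note that it does so by silently assuming that, conditionally on $\bold{i}(t)=i$, the offspring draws of the infectives are unbiased i.i.d.\ copies, which is exactly the dependence your decomposition makes visible (past pool sizes are correlated with the $\bold{X}$'s of those still infective, and $\bold{i}(t)$ is a function of those pools). So your proposal correctly identifies a subtlety that the paper's argument glides over; to obtain a complete proof you must either establish your dilution estimate rigorously or adopt the paper's conditional-moment shortcut. The remaining ingredients of your sketch---the hypergeometric and binomial variance bounds of order $1/N$ after normalization, and the Lipschitz propagation of a tolerance $\delta'$ over $T$ steps---match the paper's proof and are sound.
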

To make the proof more transparent, we begin by proving a few preliminary results. Let $\{\bold{X}_k\}_{k=1}^N$ be i.i.d.~random variables whose distribution function is that of the offspring distribution, and introduce the random variable
\begin{equation}\label{i9}
\bold{N}(t)=\min\left(\left\lfloor\gamma\sum_{k=1}^N \bold{X}_k\delta_{\bold{z}_{k,t},\mathrm{I}}\right\rfloor,N\right).
\end{equation}
Note that the number $N_t$, chosen in accordance with step 1 in AB-model 1, modified to avoid too large numbers as described earlier, is drawn from the PDF of $\bold{N}(t)$. Let $\bold{n}(t)=\bold{N}(t)/N$, let $f_N$ be the function $f_N(x)=\min\left(\frac{1}{N}\lfloor Nx\rfloor,1\right)$ and set $f(x)=\min(x,1)$.
Finally let $\bold{p}(t)$ be the random variable \begin{equation}\label{santamarta}\bold{p}(t)=\frac{\gamma}{N}\sum_{k=1}^N \bold{X}_k\delta_{\bold{z}_{k,t},\mathrm{I}}\end{equation}
and note that $\bold{n}(t)=f_N(\bold{p}(t))$.
Our overall goal is to establish that $\bold{n}(t)$ converges in probability whenever $\bold{i}(t)$ does. We begin with some basic lemmas.

\begin{lemma}\label{l2}
Let $\bold{A}_N$ and $\bold{B}_N$ be uniformly bounded real random variables on the same probability space (which may vary with $N$), and assume that they converge in probability to the values $A$ and $B$ respectively, as $N\rightarrow\infty$. Let $f$ be any continuous function on $\mathbb{R}^2$.
Then
$$
\lim_{N\rightarrow\infty}\mathbb{E}(f(\bold{A}_N,\bold{B}_N))=f(A,B),
$$
and $\lim_{N\rightarrow\infty}\mathbb{V}ar(f(\bold{A}_N,\bold{B}_N))=0.$
\end{lemma}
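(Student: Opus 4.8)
The plan is to reduce the whole statement to a single fact: a uniformly bounded sequence of random variables converging in probability to a constant also converges to that constant in $L^1$ and in $L^2$. The first step exploits the boundedness hypothesis. Since $\bold{A}_N$ and $\bold{B}_N$ are uniformly bounded, there is $M>0$ with $|\bold{A}_N|\le M$ and $|\bold{B}_N|\le M$ for every $N$; consequently the limits satisfy $|A|\le M$ and $|B|\le M$, and all the pairs $(\bold{A}_N,\bold{B}_N)$, together with $(A,B)$, lie in the compact square $K=[-M,M]^2$. On $K$ the continuous function $f$ is both bounded, say $|f|\le C$, and uniformly continuous.

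The second step is a continuous-mapping argument carried out by hand, so that it remains valid even though the underlying probability space is allowed to depend on $N$. Fix $\epsilon>0$. By uniform continuity on $K$ there is $\rho>0$ such that any two points of $K$ within Euclidean distance $\rho$ have $f$-values within $\epsilon$. Hence the event $\{|f(\bold{A}_N,\bold{B}_N)-f(A,B)|\ge\epsilon\}$ is contained in $\{|\bold{A}_N-A|\ge\rho/\sqrt{2}\}\cup\{|\bold{B}_N-B|\ge\rho/\sqrt{2}\}$, and the hypotheses $\bold{A}_N\to A$, $\bold{B}_N\to B$ in probability force the probability of this event to zero. Thus $f(\bold{A}_N,\bold{B}_N)$ converges in probability to the constant $f(A,B)$.

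The third step upgrades this to convergence of the mean. Writing $\bold{Y}_N=f(\bold{A}_N,\bold{B}_N)-f(A,B)$ we have $|\bold{Y}_N|\le 2C$, and for any $\eta>0$ the truncation split
\begin{equation*}
\mathbb{E}|\bold{Y}_N|\le \eta+2C\,\mathbb{P}(|\bold{Y}_N|>\eta),
\end{equation*}
together with $\mathbb{P}(|\bold{Y}_N|>\eta)\to 0$, gives $\limsup_N\mathbb{E}|\bold{Y}_N|\le\eta$; letting $\eta\downarrow 0$ yields $\mathbb{E}|\bold{Y}_N|\to 0$ and hence $\mathbb{E}(f(\bold{A}_N,\bold{B}_N))\to f(A,B)$, the first claim. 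For the variance I would use the elementary bound $\mathbb{V}ar(\bold{Z})\le\mathbb{E}[(\bold{Z}-c)^2]$, valid for any constant $c$, applied with $c=f(A,B)$: since $\bold{Y}_N^2$ is bounded by $(2C)^2$ and tends to $0$ in probability, the same split applied to $\bold{Y}_N^2$ shows $\mathbb{E}[\bold{Y}_N^2]\to 0$, whence $\mathbb{V}ar(f(\bold{A}_N,\bold{B}_N))\to 0$.

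The only genuine subtlety — and the point I would be most careful about — is that the probability space is permitted to vary with $N$, so one cannot invoke an off-the-shelf continuous mapping theorem or pass to almost-surely convergent subsequences on a common space. The second and third steps are written precisely to sidestep this: both the event-inclusion estimate and the truncation split refer only to quantities computed on the $N$-th space, so no joint coupling of the spaces is ever required, and each bound is a statement about a single fixed $N$ whose right-hand side vanishes as $N\to\infty$.
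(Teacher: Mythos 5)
Your proof is correct and is essentially the paper's argument in slightly different packaging: both confine $(\bold{A}_N,\bold{B}_N)$ to a compact square using uniform boundedness and then split the expectation into a small-deviation event, where continuity controls $f(\bold{A}_N,\bold{B}_N)-f(A,B)$, and its complement, whose probability vanishes while boundedness of $f$ on the square controls the integrand there. The only real differences are cosmetic: you first establish convergence in probability of $f(\bold{A}_N,\bold{B}_N)$ via uniform continuity on the compact square (the paper splits the integral directly and needs only continuity of $f$ at $(A,B)$), and you obtain the variance claim from $\mathbb{V}ar(\bold{Z})\leq\mathbb{E}\big[(\bold{Z}-f(A,B))^2\big]$ together with $L^2$ convergence, whereas the paper writes $\mathbb{V}ar(f(\bold{A}_N,\bold{B}_N))=\mathbb{E}\big[(f(\bold{A}_N,\bold{B}_N))^2\big]-\big(\mathbb{E}[f(\bold{A}_N,\bold{B}_N)]\big)^2$ and applies its first claim to the continuous function $f^2$.
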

\begin{proof}
Let $\mu$ be the joint probability measure for $(\bold{A}_N,\bold{B}_N)$ on $\mathbb{R}^2$, and denote by $C_R$ the square $\{(a,b):~\max(|a|,|b|)\leq R\}$. By the assumption of  uniform boundedness we can pick an $R$ such that
$
\mu_N(\mathbb{R}^2\setminus C_R)=0
$
for all $N$. Fix any $\delta>0$, let
$$
S_\delta=\{(a,b):~|a-A|<\delta,~|b-B|<\delta\},
$$
and note that $\lim_{N\rightarrow\infty}\mu_N(S_\delta)=1$ by assumption. Then \begin{align*}&|\mathbb{E}(f(\bold{A}_N,\bold{B}_N))-f(A,B)|=\left|\int f(a,b)~d\mu_N-f(A,B)\right|=\left|\int_{C_R} f(a,b)-f(A,B)~d\mu_N\right|\leq\\ &\sup_{(a,b)\in S_\delta}\left|f(a,b)-f(A,B)\right|\mu_N(S_\delta)+\sup_{(a,b)\in C_R}\left|f(a,b)-f(A,B)\right|\mu_N(C_R\setminus S_\delta) \end{align*}
which implies that the limit as $N\rightarrow\infty$ is bounded by $\sup_{(a,b)\in S_\delta}\left|f(a,b)-f(A,B)\right|$. But this number can be made arbitrarily small, by the continuity of $f$, which establishes the first part of the lemma.

For the second part, note that $\mathbb{V}ar(f(\bold{A}_N,\bold{B}_N))=\mathbb{E}\Big(\big(f(\bold{A}_N,\bold{B}_N)\big)^2\Big)-\Big(\mathbb{E}\big(f(\bold{A}_N,\bold{B}_N)\big)\Big)^2$
and both terms converge to $\big(f(A,B)\big)^2$ by the first part of the lemma.
\end{proof}

In particular, we have the following consequence.

\begin{lemma}\label{l1}
Let $\bold{A}_N$ be a sequence of uniformly bounded real random variables on some probability space (which may vary with $N$), and let $A$ be a fixed number. Then $(\bold{A}_N)_{N=1}^\infty$ converges in probability $A$, as $N\rightarrow\infty$, if and only if $$\lim_{N\rightarrow\infty}\mathbb{E}[\bold{A}_N]=A$$ and $\lim_{N\rightarrow\infty}\mathbb{V}ar[\bold{A}_N]=0$.
\end{lemma}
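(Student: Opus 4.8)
The plan is to prove the two implications separately, and the first thing to notice is that the forward (``only if'') direction is already essentially contained in Lemma \ref{l2}. Indeed, suppose $\bold{A}_N$ converges in probability to $A$. I would apply Lemma \ref{l2} with $\bold{B}_N$ taken to be the constant random variable $0$ (which is uniformly bounded and trivially converges in probability to $B=0$) and with the continuous function $f(a,b)=a$. Since $f(A,B)=A$, Lemma \ref{l2} yields at once that $\lim_{N\rightarrow\infty}\mathbb{E}[\bold{A}_N]=A$ and $\lim_{N\rightarrow\infty}\mathbb{V}ar[\bold{A}_N]=0$, so this direction requires no further computation.

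For the reverse (``if'') direction I would argue directly by Chebyshev's inequality. Assume $\mathbb{E}[\bold{A}_N]\rightarrow A$ and $\mathbb{V}ar[\bold{A}_N]\rightarrow 0$, write $m_N=\mathbb{E}[\bold{A}_N]$, and fix $\delta>0$. Because $m_N\rightarrow A$, for all sufficiently large $N$ we have $|m_N-A|<\delta/2$; for such $N$ the triangle inequality shows that the event $\{|\bold{A}_N-A|>\delta\}$ is contained in $\{|\bold{A}_N-m_N|>\delta/2\}$. Chebyshev's inequality then gives
\[
\mathbb{P}\left(|\bold{A}_N-A|>\delta\right)\leq \mathbb{P}\left(|\bold{A}_N-m_N|>\delta/2\right)\leq\frac{4\,\mathbb{V}ar[\bold{A}_N]}{\delta^2},
\]
and the right-hand side tends to $0$ as $N\rightarrow\infty$ by hypothesis. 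Since $\delta>0$ was arbitrary, this is exactly convergence in probability to $A$.

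I do not expect either direction to pose a serious obstacle; this lemma is really a packaging of the weak law of large numbers together with its partial converse. The one point worth emphasizing is that uniform boundedness is used only in the ``only if'' direction (through Lemma \ref{l2}): without some uniform-integrability type assumption, convergence in probability need not force $\mathbb{E}[\bold{A}_N]\rightarrow A$, and boundedness is precisely what rules out the escaping-mass counterexamples. The ``if'' direction, by contrast, is the standard Chebyshev estimate and uses neither boundedness nor Lemma \ref{l2}.
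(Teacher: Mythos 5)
Your proof is correct and takes essentially the same route as the paper: the ``only if'' direction is delegated to Lemma \ref{l2} (your explicit choice $\bold{B}_N=0$, $f(a,b)=a$ just spells out what the paper leaves implicit), and the ``if'' direction is the standard Chebyshev estimate applied around the mean $m_N=\mathbb{E}[\bold{A}_N]$. If anything, your triangle-inequality step comparing $m_N$ with $A$ is written more carefully than the paper's own display, which contains a typo (the bound should involve the measure of the \emph{complement} of the interval around $A_N$, and the conclusion should be that this complement's measure tends to $0$).
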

\begin{proof}
The ``only if'' part follows immediately from the previous lemma. The reverse is a simple application of Chebyshev's inequality.  More precisely, let $\mu_N$ be the probability measure for $\bold{A}_N$ on $\mathbb{R}$ and note that $$\mathbb{V}ar(\bold{A}_N)=\int (x-\mathbb{E}[\bold{A}_N])^2 d\mu_N\geq \epsilon^2\mu_{N}((A_N-\epsilon,A_N+\epsilon)),$$
where $A_N=\mathbb{E}(\bold{A}_N)$ and $\epsilon>0$ is any fixed number. Now if $\lim_{N\rightarrow\infty}A_N=A$ and the variance converges to 0, the above inequality clearly implies that $$\lim_{N\rightarrow\infty}\mu_{N}((A-\epsilon,A+\epsilon))=0,$$
as desired.

\end{proof}

With this we are ready to prove our first key result.

\begin{proposition}\label{p1} If $\bold{i}(t)$ converges in probability to some value $i^*$, then $\bold{p}(t)$ converges in probability $\alpha i^*$.\end{proposition}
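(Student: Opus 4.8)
The plan is to reduce the statement to the two conditions in Lemma~\ref{l1}, namely that $\mathbb{E}(\bold{p}(t))\to\alpha i^*$ and $\mathbb{V}ar(\bold{p}(t))\to 0$, and then to conclude by Chebyshev's inequality (the reverse direction in the proof of Lemma~\ref{l1}, which in fact does not require uniform boundedness, a point that matters here since $\bold{p}(t)$ need not be bounded for heavy-tailed $\bold{X}$). The cornerstone of both computations is the observation that, for each fixed $k$, the indicator $\delta_{\bold{z}_{k,t},\mathrm{I}}$ is independent of $\bold{X}_k$. Indeed, inspecting the dynamics of AB-model~1, individual $k$ only contributes its offspring count $\bold{X}_k$ to the contact pool after it has itself become infected; before that moment $k$ is susceptible and plays no active role, so the day on which $k$ is first drawn, and hence its entire subsequent state trajectory (governed thereafter only by the independent $\sigma$- and $\gamma$-coins), is a function of the remaining offspring counts $\{\bold{X}_j\}_{j\neq k}$ and of the selection/transition randomness, but never of $\bold{X}_k$ itself.

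\textbf{Mean and reduction.} Writing $q_t=\mathbb{E}(\delta_{\bold{z}_{k,t},\mathrm{I}})$, which is independent of $k$ by the exchangeability of the individuals, the independence above yields $\mathbb{E}(\bold{p}(t))=\gamma\,\mathbb{E}(\bold{X})\,q_t=\alpha q_t$, using $\gamma\mathbb{E}(\bold{X})=\gamma R_0=\alpha$. Since $\bold{i}(t)$ takes values in $[0,1]$ and converges in probability to $i^*$, the boundedness-requiring part of Lemma~\ref{l1} gives $q_t=\mathbb{E}(\bold{i}(t))\to i^*$, whence $\mathbb{E}(\bold{p}(t))\to\alpha i^*$. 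For the variance it is cleanest to split $\bold{p}(t)=\alpha\bold{i}(t)+\bold{q}(t)$, where $\bold{q}(t)=\frac{\gamma}{N}\sum_k(\bold{X}_k-R_0)\delta_{\bold{z}_{k,t},\mathrm{I}}$ is the centred fluctuation. The first term already converges to $\alpha i^*$ by assumption, so it remains to prove $\bold{q}(t)\to 0$ in probability; here $\mathbb{E}(\bold{q}(t))=0$ (again by the independence of $\delta_{\bold{z}_{k,t},\mathrm{I}}$ from $\bold{X}_k$), and it suffices to show $\mathbb{V}ar(\bold{q}(t))\to 0$.

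\textbf{Variance: the cross terms.} Expanding the variance, the $N$ diagonal terms are harmless: each equals $\mathbb{E}\big((\bold{X}-R_0)^2\big)q_t\le\mathbb{V}ar(\bold{X})$, finite by hypothesis, and contributes $O(1/N)$ after the $\gamma^2/N^2$ prefactor. The genuine difficulty lies in the $N(N-1)$ off-diagonal covariances $\mathbb{E}\big[(\bold{X}_k-R_0)(\bold{X}_l-R_0)\delta_{\bold{z}_{k,t},\mathrm{I}}\delta_{\bold{z}_{l,t},\mathrm{I}}\big]$: although $\delta_{\bold{z}_{k,t},\mathrm{I}}$ is independent of $\bold{X}_k$, it does depend on $\bold{X}_l$ through the shared contact pool, so these terms need not vanish. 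My plan to control them is to condition on the $\sigma$-algebra $\mathcal{G}_{kl}$ generated by the selection/transition randomness together with all offspring counts except $\bold{X}_k$ and $\bold{X}_l$. Given $\mathcal{G}_{kl}$ the variables $\bold{X}_k,\bold{X}_l$ are independent, while $\delta_{\bold{z}_{k,t},\mathrm{I}}$ is a function of $\bold{X}_l$ alone and $\delta_{\bold{z}_{l,t},\mathrm{I}}$ a function of $\bold{X}_k$ alone; the conditional expectation therefore factors into a product of two conditional covariances, each of the form $\mathrm{Cov}(\bold{X}_k,\delta_{\bold{z}_{l,t},\mathrm{I}}\mid\mathcal{G}_{kl})$, which Cauchy--Schwarz bounds by $\sqrt{\mathbb{V}ar(\bold{X})}\,\sqrt{\mathbb{V}ar(\delta_{\bold{z}_{l,t},\mathrm{I}}\mid\mathcal{G}_{kl})}$.

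\textbf{Main obstacle.} The hard part is then to show that $\sum_{k\neq l}\mathbb{E}\big[\mathbb{V}ar(\delta_{\bold{z}_{l,t},\mathrm{I}}\mid\mathcal{G}_{kl})\big]$ is $o(N^2)$. This quantity measures the total sensitivity of one individual's infection status to resampling a single other individual's offspring count, and proving it small amounts to a \emph{propagation-of-influence} estimate: over the fixed horizon $\{1,\ldots,T\}$, each individual's state is genuinely affected by only a bounded (in expectation) number of others, so the sum is $O(N)$ rather than $O(N^2)$. This is precisely where the hypotheses enter decisively --- the condition $i_d(t)<1/\alpha$ keeps the pool sub-critical so that the perturbation cascade does not blow up, and the finiteness of $\mathbb{V}ar(\bold{X})$ keeps the per-individual contributions integrable. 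Once this $O(N)$-per-row bound is secured, the off-diagonal contribution to $\mathbb{V}ar(\bold{q}(t))$ is $O(1/N)$, the total variance tends to $0$, and Chebyshev's inequality finishes the argument. I expect this sensitivity/cascade estimate to be the principal technical obstacle, the mean computation and the reduction to a variance bound being routine once the independence of $\delta_{\bold{z}_{k,t},\mathrm{I}}$ from $\bold{X}_k$ is established.
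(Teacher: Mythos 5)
Your mean computation is correct, and your reduction of the proposition to $\mathbb{V}ar(\bold{q}(t))\to 0$ via Chebyshev (noting, correctly, that no boundedness is needed — the paper makes the same remark) is sound. But the proposal stops exactly where its difficulty begins: the off-diagonal bound $\sum_{k\neq l}\mathbb{E}\big[\mathbb{V}ar(\delta_{\bold{z}_{l,t},\mathrm{I}}\mid\mathcal{G}_{kl})\big]=o(N^2)$ is announced as a ``propagation-of-influence estimate'' but never proved, and it is the entire content of your route. Establishing it would require a genuine coupling argument — resample $\bold{X}_k$, track how the change in the pool size $\bold{N}(s)$ propagates through the uniform draws over the $T$ days, and show the expected number of individuals whose status flips stays $O(1)$ — and nothing in your text carries this out. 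Moreover, the mechanism you invoke for it is misidentified: the hypothesis $i_d(t)<1/\alpha$ is not a sub-criticality condition (the model runs at $R_0=1.3>1$, so any infection cascade is supercritical), and it is not even assumed in Proposition \ref{p1}; in the paper it serves only to ensure the truncation $f(x)=\min(x,1)$ is inactive at the limit point, which matters for Proposition \ref{p2} and Theorem \ref{t1}, not here. What would actually keep your cascade controlled is the fixed horizon $T$ together with $\mathbb{E}(\bold{X})<\infty$, but that argument is yours to supply and is absent. As submitted, you have proved the convergence of the mean and reduced the variance statement to an unproven claim.

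For comparison, the paper's own proof is far shorter and bypasses your $\mathcal{G}_{kl}$ machinery entirely: it conditions on $\bold{i}(t)=i$, observes that the sum \eqref{santamarta} then has exactly $iN$ nonzero terms, treats those terms as i.i.d.\ copies of $\bold{X}$, computes $\mathbb{E}[\bold{p}(t)\mid\bold{i}(t)=i]=\alpha i$ and $\mathbb{V}ar(\bold{p}(t)\mid\bold{i}(t)=i)=i\gamma^2\mathbb{V}ar(\bold{X})/N$, and concludes with the laws of total expectation and total variance plus Chebyshev. It is worth saying plainly that the cross-correlations you flag as the main obstacle — conditioning on $\bold{i}(t)$ biases the offspring numbers of the currently infected, since those numbers fed the contact pool that produced the infections — are precisely what the paper's conditional-i.i.d.\ step in \eqref{calor} assumes away without comment. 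So you have located a real subtlety that the paper's argument passes over, and your factorization through $\mathcal{G}_{kl}$ (where $\delta_{\bold{z}_{k,t},\mathrm{I}}$ becomes a function of $\bold{X}_l$ alone and vice versa) is a legitimate way to attack it. But diagnosing an obstacle is not overcoming it: to become a proof, your route needs the sensitivity estimate written out in full, whereas the paper's route, whatever its implicit assumptions, at least carries its computation to completion.
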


\begin{proof}

We have \begin{equation}\label{calor}\mathbb{E}[\bold{p}(t)|\bold{i}(t)=i]=\frac{\gamma}{N}iN\mathbb{E}(\bold{X})=\gamma i R_0=\alpha i\end{equation} by \eqref{andorra} and the parameter settings following \eqref{eq:seir}, where we also used that there are precisely $iN$ non-zero copies of $\bold{X}$ in the sum \eqref{santamarta}. Thus the law of total expectation implies that
$$\mathbb{E}(\bold{p}(t))=\mathbb{E}\big[\mathbb{E}[\bold{p}(t)|\bold{i}(t)]\big]=\alpha \mathbb{E}(\bold{i}(t)),$$ which in turn yields that $\lim_{N\rightarrow\infty}\mathbb{E}(\bold{p}(t))=\alpha i^*$.
Similarly, note that $$\mathbb{E}[\bold{p}^2(t)|\bold{i}(t)=i]=\big((iN)^2-(iN)\big)(\frac{\gamma}{N}\mathbb{E}(\bold{X}))^2+(iN)\frac{\gamma^2}{N^2}\mathbb{E}(\bold{X}^2)
.$$ Combining this with \eqref{calor}, we get \begin{align*}&\mathbb{V}ar(\bold{p}(t)|\bold{i}(t)=i)=\mathbb{E}[\bold{p}^2(t)|\bold{i}(t)=i]-(\mathbb{E}[\bold{p}(t)|\bold{i}(t)=i])^2=\\&
\big(i^2-\frac{i}{N}\big)({\gamma}\mathbb{E}(\bold{X}))^2+i\frac{\gamma^2}{N}\mathbb{E}(\bold{X}^2)
-({\gamma}i\mathbb{E}(\bold{X}))^2\\&=\frac{i{\gamma^2}}{N}\left(\mathbb{E}(\bold{X}^2)-(\mathbb{E}(\bold{X}))^2\right)=\frac{i{\gamma^2}}{N}\mathbb{V}ar(\bold{X})\end{align*}
since the first and last term on the middle line cancel each other. By the law of total variance we thus get  $$\mathbb{V}ar(\bold{p}(t))=\mathbb{E}[\mathbb{V}ar(\bold{p}(t)|\bold{i}(t)=i)]+\mathbb{V}ar(\mathbb{E}[\bold{p}(t)|\bold{i}(t)=i])=
\frac{\mathbb{E}(\bold{i}(t)){\gamma^2}}{N}\mathbb{V}ar(\bold{X})+\gamma\mathbb{V}ar(\bold{i}(t))\mathbb{E}(\bold{X}).$$
Recall that both $\mathbb{E}(\bold{i}(t))$ and $\mathbb{V}ar(\bold{i}(t))$ implicitly depend on $N$. By the assumption on $\bold{i}(t)$ and Lemma \ref{l1} we have  $\lim_{N\rightarrow\infty}\mathbb{E}(\bold{i}(t))=i^*$ so the first term (on the right) goes to zero due to the division by $N$. The second term also goes to zero, since $\lim_{N\rightarrow\infty}\mathbb{V}ar(\bold{i}(t))=0$, again by Lemma \ref{l1}.

We have shown that $\lim_{N\rightarrow\infty}\mathbb{E}(\bold{p}(t))=\alpha i^*$ and $\lim_{N\rightarrow\infty}\mathbb{V}ar(\bold{p}(t))=0$, so $\bold{p}(t)$ converges in probability to $\alpha i^*$ by Chebyshev's inequality (the details are the same as in the proof of Lemma \ref{l1}, which holds even though $\bold{p}(t)$ is not bounded).
\end{proof}

In order to convert this into a similar statement about $\bold{n}$, we first need another lemma.

\begin{lemma}\label{l3}
Let $\bold{A}_N$ be a sequence of real random variables on some probability space (which may vary with $N$), and assume that $(\bold{A}_N)_{N=1}^\infty$ converges in probability to the value $A$, as $N\rightarrow\infty$. Let $f_N$ be a sequence of uniformly bounded functions on $\mathbb{R}$ that converge uniformly to some continuous function $f$. Then $
(f_N(\bold{A}_N))_{N=1}^{\infty}
$ converges in probability to $f(A)$.
\end{lemma}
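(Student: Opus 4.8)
The plan is to argue directly from the definition of convergence in probability. Fix an arbitrary $\eta>0$; the goal is to show that $\mathbb{P}(|f_N(\bold{A}_N)-f(A)|\geq\eta)\to 0$ as $N\to\infty$. The natural first step is to insert $f(\bold{A}_N)$ and split via the triangle inequality,
$$|f_N(\bold{A}_N)-f(A)|\leq |f_N(\bold{A}_N)-f(\bold{A}_N)|+|f(\bold{A}_N)-f(A)|,$$
so that the deterministic discrepancy between $f_N$ and $f$ is separated from the effect of feeding in the random argument $\bold{A}_N$.

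For the first term I would use the hypothesis that $f_N\to f$ uniformly. Writing $\|f_N-f\|_\infty=\sup_{x}|f_N(x)-f(x)|$, we have $|f_N(\bold{A}_N)-f(\bold{A}_N)|\leq \|f_N-f\|_\infty$ surely (i.e.~for every outcome), and $\|f_N-f\|_\infty\to 0$. Hence one can choose $N_0$ so that this term lies below $\eta/2$ for all $N\geq N_0$, uniformly in the outcome. For the second term I would invoke continuity of $f$ at the single point $A$: given $\eta$, pick $\delta>0$ such that $|x-A|<\delta$ implies $|f(x)-f(A)|<\eta/2$. Then on the event $\{|\bold{A}_N-A|<\delta\}$ the second term is automatically below $\eta/2$.

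Combining the two estimates, for $N\geq N_0$ the whole expression is below $\eta$ whenever $|\bold{A}_N-A|<\delta$, so that
$$\mathbb{P}(|f_N(\bold{A}_N)-f(A)|\geq\eta)\leq \mathbb{P}(|\bold{A}_N-A|\geq\delta).$$
Since $\bold{A}_N$ converges in probability to $A$, the right-hand side tends to $0$, and letting $N\to\infty$ finishes the proof. The only mildly delicate point — and the reason the lemma is phrased with \emph{uniform} convergence rather than mere pointwise convergence — is that the function and its argument both vary with $N$ simultaneously; the uniform convergence is precisely what decouples these two effects, allowing the $f_N$-versus-$f$ error to be controlled surely (independently of the random $\bold{A}_N$) while the continuity of $f$ handles the fluctuation of the random argument. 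I do not expect the uniform boundedness of $f_N$ to play any role in this particular argument.
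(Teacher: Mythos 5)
Your proof is correct, but it takes a genuinely different route from the paper's. You argue directly from the definition of convergence in probability, via the split $|f_N(\bold{A}_N)-f(A)|\leq \sup_x|f_N(x)-f(x)|+|f(\bold{A}_N)-f(A)|$: the first term is killed surely (outcome-by-outcome) by uniform convergence, and the second is controlled on the high-probability event $\{|\bold{A}_N-A|<\delta\}$ using continuity of $f$ at the single point $A$, giving $\mathbb{P}(|f_N(\bold{A}_N)-f(A)|\geq\eta)\leq\mathbb{P}(|\bold{A}_N-A|\geq\delta)$ for large $N$. The paper instead stays within the moment-based framework it uses throughout the appendix: it shows $\lim_{N\rightarrow\infty}\mathbb{E}(f_N(\bold{A}_N))=f(A)$ by splitting the integral $\int f_N-f(A)\,d\mu_N$ over a neighborhood $(a,b)$ of $A$ and the two tails, bounding the tail contributions by $\sup\{|f_N(x)-f(A)|:x,N\}\,\mathbb{P}(\bold{A}_N\notin(a,b))$, then shows the variance tends to $0$ as in Lemma \ref{l2}, and concludes convergence in probability via Chebyshev (Lemma \ref{l1}). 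Note that the tail estimate and the appeal to Lemma \ref{l1} are precisely where the uniform-boundedness hypothesis enters the paper's argument, so your closing observation is accurate: your proof never uses it, and is therefore slightly more general (and shorter). What the paper's version buys in exchange is uniformity of method—it reuses the $\mathbb{E}$-plus-$\mathbb{V}ar$-plus-Chebyshev template of Lemmas \ref{l1} and \ref{l2}—and it records the auxiliary limits $\mathbb{E}(f_N(\bold{A}_N))\rightarrow f(A)$ and $\mathbb{V}ar(f_N(\bold{A}_N))\rightarrow 0$ along the way, in the same style as the neighboring results; under the stated boundedness hypothesis these also follow from your conclusion via Lemma \ref{l2}, so nothing downstream is lost by your approach.
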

\begin{proof}
Let $\mu_N$ be the probability measure for $\bold{A}_N$ on $\mathbb{R}$ and pick any $a,b$ such that $a<A<b$. Note that $$\mathbb{E}(f_N(\bold{A}_N)-f(A))=\int_{\mathbb{R}} f_N-f(A)~d\mu_N=\int_{(-\infty,a]} f_N-f(A)~d\mu_N+\int_{(a,b)} f_N-f(A)~d\mu_N+\int_{[b,\infty)} f_N-f(A)~d\mu_N.$$
The first and last contribution are bounded in modulus by $\sup\{|f_N(x)-f(A)|:~x,N\}\mathbb{P}(\bold{A}_N\not \in (a,b))$, which goes to 0 by assumption on $\bold{A}_N$. The central term is bounded by $$\sup\{|f_N(x)-f(A)|:~a<x<b,~N\}\mu_N((a,b))<\sup\{|f_N(x)-f(A)|:~a<x<b,~N\},$$ which in the limit as $N\rightarrow\infty$ equals $\sup\{|f(x)-f(A)|:~a<x<b\}$. By the continuity of $f$ we see that this bound can be made arbitrarily small, upon choosing $a$ and $b$ close to $f(A)$. This proves that $$\lim_{N\rightarrow\infty}\mathbb{E}(f_N(\bold{A}_N))=f(A),$$
and the fact that the variance converges to 0 now follows exactly as in the previous lemma.

\end{proof}

We can finally establish the following:

\begin{proposition}\label{p2} If $\bold{i}(t)$ converges in probability to some value $i^*$, then $\bold{n}(t)$ converges in probability to $\min(\alpha i^*,1)$.\end{proposition}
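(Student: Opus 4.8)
The plan is to reduce the statement entirely to the two results already in hand, namely Proposition~\ref{p1} and Lemma~\ref{l3}, exploiting the identity $\bold{n}(t)=f_N(\bold{p}(t))$ recorded just before \eqref{santamarta}. First I would invoke Proposition~\ref{p1}: under the standing hypothesis that $\bold{i}(t)$ converges in probability to $i^*$, it yields that $\bold{p}(t)$ converges in probability to $\alpha i^*$. This $\bold{p}(t)$ will play the role of the sequence $\bold{A}_N$ in Lemma~\ref{l3}, with limiting value $A=\alpha i^*$, and the functions $f_N(x)=\min\left(\frac1N\lfloor Nx\rfloor,1\right)$ will be the sequence of functions, with limit $f(x)=\min(x,1)$.

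The substantive step is then to verify that $f_N$ satisfies the hypotheses of Lemma~\ref{l3}, i.e.~that the family is uniformly bounded and converges uniformly to the continuous function $f$. Here lies the only genuine subtlety: on all of $\mathbb{R}$ neither $f_N$ nor $f$ is bounded below, so Lemma~\ref{l3} does not apply verbatim. I would resolve this by observing that $\bold{p}(t)\geq 0$ almost surely, being a nonnegative multiple of a sum of nonnegative terms; consequently the values of $f_N$ and $f$ on $(-\infty,0)$ never enter into $f_N(\bold{p}(t))$, and I would simply redefine both functions to equal $0$ there. After this harmless modification one has $0\leq f_N\leq 1$ for every $N$, so the family is uniformly bounded by $1$, and $f$ remains continuous (its only possible issue, at the origin, is fine since $f(0)=0$ matches the left limit).

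For the uniform convergence I would use the elementary bound $0\leq x-\frac1N\lfloor Nx\rfloor<\frac1N$ for all $x\geq 0$, together with the fact that the map $y\mapsto\min(y,1)$ is $1$-Lipschitz; combining these gives $\sup_{x\geq 0}|f_N(x)-f(x)|\leq\frac1N\to 0$. Having checked all hypotheses, I would conclude by applying Lemma~\ref{l3} with $\bold{A}_N=\bold{p}(t)$, $A=\alpha i^*$, and the (modified) functions $f_N$ and $f$: it follows that $\bold{n}(t)=f_N(\bold{p}(t))$ converges in probability to $f(\alpha i^*)=\min(\alpha i^*,1)$, which is exactly the assertion. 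The main obstacle is therefore purely one of bookkeeping---reconciling the unboundedness of $\min(\cdot,1)$ with the boundedness hypothesis of the already-proved Lemma~\ref{l3}, handled via the nonnegativity of $\bold{p}(t)$---rather than any new probabilistic estimate.
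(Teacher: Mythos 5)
Your proposal is correct and follows essentially the same route as the paper, which likewise proves Proposition~\ref{p2} by combining Proposition~\ref{p1} with Lemma~\ref{l3} via the identity $\bold{n}(t)=f_N(\bold{p}(t))$ and the uniform convergence of $f_N$ to $\min(x,1)$. Your extra care in restricting $f_N$ and $f$ to $[0,\infty)$ (justified by $\bold{p}(t)\geq 0$ almost surely) to secure the uniform boundedness hypothesis of Lemma~\ref{l3} is a legitimate detail that the paper's one-line proof leaves implicit, and your $1/N$ bound via the $1$-Lipschitz property of $\min(\cdot,1)$ correctly verifies the uniform convergence.
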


\begin{proof}
Recall that with $f_N(x)=\min\left(\frac{1}{N}\lfloor Nx\rfloor,1\right)$ we have $\bold{n}=f_N(\bold{p})$, and that $(f_N)_{N=1}^\infty$ converges uniformly to $\min(x,1)$. The desired conclusion is now an immediate application of Proposition \ref{p1} and the previous lemma.
\end{proof}

Armed with this, we can prove the main result as follows:

\begin{proof}[Proof of Theorem \ref{t1}]
Denote $(\bold{s},\bold{e},\bold{i},\bold{r})$ by $\bold{y}$. Let $\Omega_{N,t}$ be the set of realizations such that the four previous inequalities hold on day $t$, so that $\Omega_N=\cap_{t=1}^T \Omega_{N,t}.$ We then have $$\mathbb{P}(\{\bold{y}\in \Omega_N^c\})=\mathbb{P}(\cup_{t=1}^T\{\bold{y}\in \Omega_{N,t}^c\})\leq \sum_{t=1}^T \mathbb{P}(\{\bold{y}\in \Omega_{N,t}^c\}),$$
(where $\Omega^c$ denotes the complement of $\Omega$), by which it follows that it suffices to prove
\begin{equation}\label{trs}\lim_{N\rightarrow\infty}\mathbb{P}(\{\bold{y}\in \Omega_{N,t}\})=1,
\end{equation}
for a fixed but arbitrary $t.$

This will now be established using induction.
It is certainly true for $t=0$, because the discrepancy in $y_d(0)$ and $y_r(0)$ (for any realization $y_r$ of $\bold{y}$) caused by taking the integer part of $\epsilon N$ is less than $1/N$, and hence $\mathbb{P}(\{\bold{y}\in \Omega_{N,0}\})=1$ whenever $N>1/\delta$.

Now we assume that \eqref{trs} holds for a given $t$.
In order to establish \eqref{trs} for $t:=t+1$ it suffices to prove that each of the four probabilities $$\mathbb{P}(|\bold{s}(t+1)-s_d(t+1)|<\delta),~\mathbb{P}(|\bold{e}(t+1)-e_d(t+1)|<\delta),~\mathbb{P}(|\bold{i}(t+1)-i_d(t_0+1)|<\delta)$$ and  $\mathbb{P}(|\bold{r}(t+1)-r_d(t+1)|<\delta)$ converge to 1 as $N\rightarrow \infty$.
Let $\bold{M}_{se}(t)$ be the random variable which denotes the amount of people who, on day $t$, leave $\mathrm{S}$ and enter $\mathrm{E}$. Define $\bold{M}_{ei}$ and $\bold{M}_{ir}$ analogously, and let $\bold{m}_{se}$, $\bold{m}_{ei}$ and $\bold{m}_{ir}$ be the corresponding normalized variables.

Our first goal will be to establish that
\begin{equation}\label{i5}
\lim_{N\rightarrow\infty}\mathbb{P}(|\bold{s}(t+1)-s_d(t+1)|<\delta)=1
\end{equation}
for all $\delta>0$, which by Lemma \ref{l1} is equivalent with proving that $$\lim_{N\rightarrow\infty}\mathbb{E}(\bold{s}(t+1))=s_d(t+1)=s_d(t)-\alpha s_d(t)i_d(t)$$ and that $\lim_{N\rightarrow\infty}\mathbb{V}ar(\bold{s}(t+1))=0$. In particular, the inductive assumption implies that $\lim_{N\rightarrow\infty}\mathbb{E}(\bold{s}(t))=s_d(t)$ and $\lim_{N\rightarrow\infty}\mathbb{V}ar(\bold{s}(t))=0.$
Since clearly $\bold{s}(t+1)=\bold{s}(t)-\bold{m}_{se}(t+1)$, we are done if we establish that \begin{equation}\label{bogota}
\mathbb{E}[\bold{m}_{se}(t+1)]=\alpha s_d(t)i_d(t)\end{equation} and $\lim_{N\rightarrow\infty}\mathbb{V}ar(\bold{m}_{se}(t+1))=0$. (To see this, note that the expectation is linear and  that  $$\mathbb{V}ar(\bold{s}(t)-\bold{m}_{se}(t+1))\leq 2\mathbb{V}ar(\bold{s}(t))+2\mathbb{V}ar(\bold{m}_{se}(t+1))$$ which follows from the basic inequality $|2xy|\leq x^2+y^2$).

We begin by proving \eqref{bogota}.
It is standard that the probability for $\bold{M}_{se}(t+1)=M$, given that $\bold{S}(t)=S$ and $\bold{N}(t)=N_t$, is given by the hypergeometric distribution (sampling without replacement)
\begin{equation}\label{pdef}
\mathbb{P}(\bold{M}_{se}(t+1)=M|~\bold{S}(t)=S,\bold{N}(t)=N_t)=\frac{\binom{S}{M}\binom{N-S}{N_t-M}}{\binom{N}{N_t}},
\end{equation}
for which we have the well known formulas \begin{equation}\label{pdeff}\mathbb{E}[\bold{M}_{se}(t+1)|~\bold{S}(t)=S,~\bold{N}(t)=N_t]=N_t\frac{S}{N}\end{equation} and $\mathbb{V}ar (\bold{M}_{se}(t+1)|~\bold{S}(t)=S,\bold{N}(t)=N_t)=N_t\frac{S}{N}\frac{N-S}{N}\frac{N-N_t}{N-1}$.
Now, by the law of total probability we have
$$\mathbb{E}[\bold{M}_{se}(t+1)]=\mathbb{E}\left[\mathbb{E}[\bold{M}_{se}(t+1)|~\bold{S}(t),~\bold{N}(t)]\right]=\mathbb{E}\left[\bold{N}(t)\frac{\bold{S}(t)}{N}\right]$$
which upon division by $N$ yields
$$\mathbb{E}[\bold{m}_{se}(t+1)]=\mathbb{E}\left[\bold{n}(t){\bold{s}(t)}\right].$$

By the induction hypothesis we know that $\bold{s}(t)$ converges in probability to $s_d(t)$, so the desired identity \eqref{bogota} follows by Lemma \ref{l2} and the fact that
$\bold{n}(t)$ converges to $\alpha i_d(t)$ in measure, as established in Proposition \ref{p2} (and the assumption that $i_d(t)<1/\alpha$).

We now turn to the identity $\lim_{N\rightarrow\infty}\mathbb{V}ar(\bold{m}_{se}(t+1))=0$. By \eqref{pdeff} and the subsequent formula for the variance we have
$\mathbb{E}[\bold{m}_{se}(t+1)|~\bold{s}(t)=s,~\bold{n}(t)=n]=ns$ and $$\mathbb{V}ar (\bold{m}_{se}(t+1)|~\bold{s}(t)=s,\bold{n}(t)=n)=ns (1-s)(1-n)\frac{1}{N-1}.$$
The law of total variance thus implies $$\mathbb{V}ar (\bold{m}_{se}(t+1))=\frac{1}{N-1}\mathbb{E}\left[\bold{n}\bold{s} (1-\bold{s})(1-\bold{n})\right]+\mathbb{V}ar[\bold{n}\bold{s}].$$
The above expectation converges to a finite number as $N\rightarrow\infty$ by Lemma \ref{l2} and Proposition \ref{p2}, hence the first term goes to 0 due to the factor $\frac{1}{N-1}$. By Lemma \ref{l2} again, the variance of the second term converges to 0 as well.

This completes the proof that \eqref{i5} holds. The proofs of the corresponding identities for $\bold{e}$, $\bold{i}$ and $\bold{r}$ are simpler, so we omit the details.
\end{proof}

\end{document}